\pgfplotsset{compat=1.10}
\newtheorem{lemma}{Lemma}
\newtheorem{remark}{Remark}
\theoremstyle{definition}
\newlength\figureheight
\newlength\figurewidth
\begin{document}

\title{Improper Gaussian Signaling for the $K$-user MIMO Interference Channels with Hardware Impairments}

\author{Mohammad Soleymani$^*$, \emph{Student Member, IEEE}, Ignacio Santamaria$^\dag$ \emph{Senior Member, IEEE} and\\ Peter J. Schreier$^*$, \emph{Senior Member, IEEE}
\\ \thanks{
$^*$Mohammad Soleymani and Peter J. Schreier are with the Signal and System Theory Group, Universit\"at Paderborn,  33098 Paderborn, Germany, http://sst.upb.de  (emails: \protect\url{{mohammad.soleymani,  peter.schreier}@sst.upb.de}).  

$^\dag$Ignacio Santamaria is with the Department of Communications Engineering, University of Cantabria, 39005 Santander, Spain (email: \protect\url{i.santamaria@unican.es}).
}}
\maketitle
\begin{abstract}
This paper investigates the performance of improper Gaussian signaling (IGS) for the $K$-user multiple-input, multiple-output (MIMO)  interference channel (IC) with hardware impairments (HWI). HWI may arise due to imperfections in the devices like I/Q imbalance, phase noise, etc. With I/Q imbalance, the received signal is a widely linear transformation of the transmitted signal and noise. Thus, the effective noise at the receivers becomes improper, which means that its real and imaginary parts are correlated and/or have unequal powers. 

IGS can improve system performance with improper noise and/or improper interference. In this paper, we study the benefits of IGS  for this scenario in terms of two performance metrics: achievable rate and energy efficiency (EE). We consider the rate region, the sum-rate, the EE region and the global EE optimization problems to fully evaluate the IGS performance. To solve these non-convex problems, we employ an optimization framework  based on majorization-minimization algorithms, which allow us to obtain a stationary point of any optimization problem  in which either the objective function and/or constraints are linear functions of rates. Our numerical results show that IGS can significantly  improve the performance of the  $K$-user MIMO  IC with HWI and I/Q imbalance, where its benefits increase with the number of users, $K$, and the imbalance level, and decrease with the number of antennas.

\end{abstract} 
\begin{IEEEkeywords}
 Achievable rate region, convex/concave procedure, energy efficiency,  generalized Dinkelbach algorithm, hardware impairments,  improper Gaussian signaling, interference channel, majorization-minimization, MIMO systems. 
\end{IEEEkeywords}
\section{Introduction}
Wireless communication devices are never completely ideal in practice, which can significantly degrade the system performance especially when the hardware non-idealities are not adequately modeled and accounted for the system design. 
In general, hardware impairments (HWI) may occur due to imperfections such as quantization noise, phase noise, amplifier nonlinearities, and I/Q imbalance \cite{ bjornson2013new, tlebaldiyeva2019device,    javed2017full, javed2018improper, 
soleymani2019improper, boshkovska2018power, zhang2019secure,  zhu2017analysis, bjornson2014massive, zhang2016achievable, zhang2018scaling, xia2015hardware,  feng2019two, cheng2018performance,
bjornson2013capacity,      zhang2018performance, younas2017bandwidth,  
mohamed2019massive,  tlebaldiyeva2019performance}. 
In addition to hardware non-idealities, communication systems may suffer from strong interference because modern wireless communication systems are mostly interference-limited. Thus, interference-management techniques will play a key role in  
5G  and future generations of wireless communication systems \cite{andrews2014will}.  
In the last decade, the use of improper Gaussian signaling (IGS) has been proposed and extensively studied as an interference-management technique \cite{ javed2017full, cadambe2010interference, yang2014interference, lameiro2013degrees, ho2012improper, zeng2013transmit, soleymani2019robust, soleymani2019ergodic, nguyen2015improper, lagen2016superiority, kurniawan2015improper, lameiro2017rate, lameiro2015benefits, amin2017overlay, soleymani2019energy,  Sole1909:Energy, lagen2016coexisting, gaafar2017underlay, nasir2019improper, tuan2019non}. 
The real and imaginary parts of complex improper signals are correlated and/or have unequal powers, for a full treatment of improper signals the reader is referred to \cite{schreier2010statistical,adali2011complex,adali2014optimization}. 
 
\subsection{Related work}
The impact of hardware imperfections has been studied for various wireless communications scenarios in \cite{boshkovska2018power, soleymani2019improper, bjornson2013new, javed2017full, javed2018improper, tlebaldiyeva2019device,
zhu2017analysis, zhang2016achievable, xia2015hardware, bjornson2014massive, cheng2018performance,
bjornson2013capacity,   zhang2018scaling, feng2019two, zhang2019secure, zhang2018performance, younas2017bandwidth, mohamed2019massive, tlebaldiyeva2019performance}. 
For instance, \cite{soleymani2019improper,bjornson2013new, javed2017full, javed2018improper, tlebaldiyeva2019device} considered different interference-limited networks with single-antenna transceivers subject to additive hardware distortions (AHWD). 
When there is AHWD, the noise distortion power is a linear function of the signal power at the corresponding antenna \cite{soleymani2019improper,bjornson2013new, javed2017full, javed2018improper, tlebaldiyeva2019device}.
In \cite{bjornson2013new}, the authors investigated the effect of AHWD on the performance of a dual-hop relay with both amplify-and-forward and decode-and-forward protocols, and derived 
closed-form expressions for the outage probabilities, as well as an upper bound for the ergodic capacity. 
The outage probability for a device-to-device millimeter wave communication system with complex proper AHWD was derived in \cite{tlebaldiyeva2019device}. 
However, AHWD may be, in general, improper due to I/Q imbalance\footnote{In this paper, AHWD noise refers to the model in \cite{boshkovska2018power,soleymani2019improper,
zhu2017analysis, zhang2016achievable, xia2015hardware,bjornson2014massive,
bjornson2013capacity,bjornson2013new} in which the power of the AHWD noise is a linear function of the power of the received signal. On the other hand, HWI with I/Q imbalance refers to the model in \cite{javed2019multiple,boulogeorgos2016energy}, where the received signal is a function of the widely linear transform of the transmitted signal and noise, and the variance of the noise  is fixed and independent of signal powers.} 
\cite{javed2019asymmetric, javed2019multiple, javed2018improper, javed2017asymmetric, javed2017impact, javed2017full, soleymani2019improper,boulogeorgos2016energy}.
The authors in \cite{javed2017full} considered a relay channel with improper AHWD and maximized the achievable rate of the system by optimizing over complementary variances. 
In \cite{javed2018improper}, the authors considered a full-duplex multihop relay channel with improper AHWD. 
The work in \cite{soleymani2019improper} considered the 2-user single-input, single-output (SISO)  IC with improper AHWD and proposed two suboptimal IGS schemes to obtain the achievable rate region of the  2-user SISO IC.

Hardware non-idealities can be even more critical in multiple-antenna systems \cite{boshkovska2018power, zhang2016achievable, xia2015hardware, bjornson2014massive, cheng2018performance, zhu2017analysis,
bjornson2013capacity,   zhang2018scaling, feng2019two, zhang2019secure, zhang2018performance, younas2017bandwidth}. 
In \cite{bjornson2013capacity}, the authors studied the capacity limit and multiplexing gain of  multiple-input, multiple-output (MIMO) point-to-point systems with AHWD at both transmitter and receiver sides.
The papers  \cite{boshkovska2018power, zhang2019secure, zhu2017analysis} studied secure communications for massive MIMO systems with AHWD in different scenarios. 
The paper  \cite{bjornson2014massive} investigated the impact of AHWD on the performance of  cellular communication  systems in which the base station employs a massive number of antennas.
The papers  \cite{zhang2018scaling,zhang2016achievable,xia2015hardware} studied the performance of massive MIMO systems with AHWD in fading channels in different scenarios. 
In \cite{feng2019two}, the authors investigated the system performance of a two-way massive MIMO relay channel with AHWD. 
In \cite{cheng2018performance}, the authors considered beamforming designs for a dual-hop massive MIMO amplify-and-forward relay channel in the presence of AHWD and analyzed the outage probabilities for the system. 

In addition to AHWD, there might be other sources of hardware imperfections like I/Q imbalance. When I/Q imbalance occurs, the received signal can be modeled through a widely linear transformation  of the transmitted signal and the aggregated noise. 
Hence, the received signal can be improper even if the transmitted signal and additive noise are both proper \cite{javed2019multiple}.
It has been shown that IGS can improve the system performance in the presence of improper noise or interference-plus-noise \cite{javed2019asymmetric, javed2019multiple, javed2018improper, javed2017asymmetric, javed2017impact, javed2017full, soleymani2019improper,boulogeorgos2016energy}. 
For example, it is shown in \cite{javed2017asymmetric} that IGS is the optimal signaling for a point-to-point single-input, multiple-output (SIMO) system with asymmetric or improper AHWD.

Improper signaling can  be also used  as an 
interference-management technique in modern wireless communications systems \cite{ javed2017full, cadambe2010interference, yang2014interference, lameiro2013degrees, ho2012improper, zeng2013transmit, nguyen2015improper, lagen2016superiority, kurniawan2015improper, lameiro2017rate, lameiro2015benefits, amin2017overlay, soleymani2019energy, soleymani2019robust, soleymani2019ergodic, Sole1909:Energy, lagen2016coexisting, gaafar2017underlay, nasir2019improper, tuan2019non}. 
 It was  shown that IGS can improve several performance metrics of different interference-limited systems. 
First, interference-alignment techniques transmitting IGS can increase the degrees of freedom (DoF) of different ICs, as proved in \cite{ cadambe2010interference, yang2014interference,lameiro2013degrees}. 
Second, IGS can provide significant gains in terms of achievable rate and/or power/energy-efficiency perspectives when treating interference as noise (TIN) is the decoding strategy \cite{ho2012improper, zeng2013transmit, nguyen2015improper, lagen2016superiority, kurniawan2015improper, lameiro2017rate, lameiro2015benefits, amin2017overlay, soleymani2019energy, soleymani2019robust, Sole1909:Energy, lagen2016coexisting, gaafar2017underlay, nasir2019improper, tuan2019non, soleymani2019ergodic}. 
IGS was considered as an interference-management technique for the first time in \cite{cadambe2010interference}, where it was shown that IGS can increase the DoF of the 3-user IC. 
The papers  \cite{ho2012improper, zeng2013transmit, soleymani2019robust, soleymani2019ergodic} showed the superiority of IGS in the 2-user SISO IC in terms of achievable rate. 
A robust IGS design for the 2-user IC with imperfect channel state information (CSI) was proposed in \cite{soleymani2019robust }. 
The ergodic rate of IGS and proper Gaussian signaling (PGS) schemes in the 2-user IC with statistical CSI was studied in \cite{soleymani2019ergodic}. 
The works  \cite{nguyen2015improper, lagen2016superiority, kurniawan2015improper, lameiro2017rate}  investigated performance improvements by IGS in the Z-IC. 
 In \cite{amin2017overlay}, the authors showed that IGS can decrease the outage probability of the secondary user (SU) in an overlay cognitive radio (CR) for a given rate target. 
The work  \cite{lameiro2015benefits} showed that IGS can increase the achievable rate of the SU in an underlay CR (UCR) if the power gain of the interference link is greater than a threshold.  
Energy-efficient designs for IGS were proposed in \cite{soleymani2019energy} for UCR and in \cite{Sole1909:Energy} for the  $K$-user SISO IC. 
In \cite{nasir2019improper}, the authors showed that IGS can increase the minimum achievable rate of the users in the MIMO broadcast channel. 
The paper  \cite{ tuan2019non} investigated the performance improvements by IGS in non-orthogonal multiple access systems.

The papers  \cite{cadambe2010interference, yang2014interference, ho2012improper, zeng2013transmit, nguyen2015improper, lagen2016superiority, kurniawan2015improper, lameiro2017rate, lameiro2015benefits, amin2017overlay, soleymani2019energy, soleymani2019robust, soleymani2019ergodic, Sole1909:Energy, lagen2016coexisting, gaafar2017underlay, nasir2019improper, tuan2019non} studied the performance of IGS with ideal devices, which is not a realistic scenario. 
The papers  \cite{javed2019asymmetric, javed2018improper, javed2017asymmetric, javed2017impact, javed2017full, soleymani2019improper} consider the performance of IGS with AHWD as indicated before. However, to the best of our knowledge, there is no work on IGS in multiple-antenna interference-limited systems in the presence of  HWI with I/Q imbalance.

The performance of IGS in the 2-user and/or $K$-user SISO ICs has been vastly studied, e.g., in \cite{ cadambe2010interference, ho2012improper, zeng2013transmit, soleymani2019robust, soleymani2019ergodic, nguyen2015improper,    Sole1909:Energy}.  
However, 
the performance of IGS in the $K$-user MIMO IC  still requires further investigation.  
In SISO systems, it is known that the benefits of IGS are greatly reduced when the number of resources, e.g., time or frequency channel uses, increases. For instance, in \cite{soleymani2018improper}, we showed that IGS does not provide a significant gain in orthogonal frequency division multiplexing (OFDM) UCR systems when the number of subcarriers grows. 
The same behavior is observed in \cite{hellings2018improper}, where PGS is proved to be optimal in the 2-user IC if coded time-sharing is allowed in which the average power consumption is constrained instead of the {\em instantaneous power}. 
Hence, it seems that increasing the number of temporal or frequency dimensions provides a more flexible power allocation for PGS, which might lead to minor improvements by IGS. 
In MIMO systems,  the number of resources increases by allowing more antennas at the transceivers. Thus, the following questions arise: how does IGS perform in the $K$-user MIMO ICs? Is IGS still beneficial when the number of spatial dimensions (antennas) increases?  In this paper, we answer these questions and analyze the performance of IGS by considering different rate and energy-efficiency metrics and solving various optimization problems.

\subsection{Contribution}
This paper investigates the performance of IGS  in the  $K$-user MIMO IC.  
To the best of our knowledge, this is the first work to study IGS in the  $K$-user MIMO IC with HWI including I/Q imbalance. 
We employ the HWI model in \cite{javed2019multiple} and assume non-ideal transceivers, which generate an additive proper white Gaussian noise. 
Moreover, we assume that the upconversion (at the transmitter side) and/or the downconversion (at the receiver side) chains suffer from I/Q imbalance, which makes the received signal a function of the widely linear transform of the transmitted signal and the aggregated noise. 
Thus, the aggregated noise becomes improper, which also motivates us to consider improper signals and improper signaling.

It is known that by making signals improper, we introduce structure into them  by correlating their real and imaginary parts \cite{schreier2010statistical,adali2011complex,adali2014optimization}, which can bring benefits to the performance of interference-limited systems. Indeed, on the one hand, interference can be mitigated or suppressed more effectively at the receiver side when it has more structure. On the other hand, the differential  entropy of a Gaussian signal decreases if the signal is more structured  (e.g., non-circular).   Improper signaling  schemes can improve the overall system performance when the gain we get by receiving an improper interference  at the non-intended receivers overcomes the rate loss caused by transmitting improper signals.
Additionally, 
IGS provides more optimization parameters and hence brings more flexibility than PGS schemes to the design of interference-limited wireless communication systems. This feature can be exploited to improve the system performance. 
Note that IGS includes PGS as a special case, where the complementary variances are zero. Hence, 
PGS never outperforms the optimal IGS  scheme. In this paper, we investigate whether this flexibility in design leads to performance improvement for the $K$-user MIMO IC with HWI.

Throughout this paper, we consider two main performance metrics: the achievable rate and the energy efficiency (EE). 
The EE of a user is defined as the ratio of its achievable rate to its total power consumption \cite{zappone2015energy}. 
There are only a few works that study the energy-efficiency of IGS in SISO networks \cite{soleymani2019energy, Sole1909:Energy}, but the question of how these results translate to the $K$-user MIMO IC remains unanswered.
In this paper, we provide an answer and show that IGS 
 can be beneficial in MIMO systems in terms of  achievable rate and EE  as well. 
Interestingly, IGS provides more relative  gain in terms of achievable rate than in terms of energy efficiency. This is in agreement with our previous findings on energy-efficient IGS schemes for SISO systems \cite{soleymani2019energy, Sole1909:Energy}. 
For instance, in \cite{soleymani2019energy}, we derived the necessary and sufficient conditions for optimality of IGS in SISO UCR from an EE point of view and showed that these conditions are more restrictive  than those obtained when the achievable rate is used as performance metric instead. 
Moreover, although there are some benefits for IGS in terms of global EE (defined as the ratio of the total achievable rate of the network to the total power consumption of the network) for the $K$-user MIMO  IC, as we will show, these benefits may not be significant.
In other words, our numerical results suggest that IGS does not provide a significant gain in terms of global EE for the  $K$-user MIMO IC.

In order to analyze the performance of IGS, we consider different optimization problems such as 
achievable rate region, maximum sum-rate, energy-efficiency region and global energy efficiency.
To solve these non-convex problems, we first formulate a general optimization problem that encompasses all performance metrics under study and 
then, employ a majorization-minimization framework to obtain solutions for each problem. 
The main idea of this framework is based on the structure of the achievable rate or energy-efficiency functions in interference-limited systems when interference is treated as noise. 
Specifically, the achievable rate with TIN is a difference of two concave/convex functions. 
We exploit this feature and employ a majorization-minimization (MM) approach to derive a stationary point\footnote{In a constrained optimization problem, a stationary point of  satisfies the corresponding
Karush-Kuhn-Tucker (KKT) conditions \cite{lanckriet2009convergence}.} of every optimization problem in interference-limited systems with TIN in which the objective function and/or the constraints are linear functions of the rates.

Our numerical results show that IGS can improve the performance of the $K$-user MIMO IC with HWI. Additionally, the results show an interesting behavior of IGS as an interference-management technique. 
We observe that the IGS benefits decrease with the number of antennas either at the transmitter or receiver sides for a fixed number of users. 
This is due to the fact that interference can be managed more effectively by PGS when there are more available spatial dimensions, and consequently, IGS provides less gain as an interference-management technique  in MIMO systems. 
Note however that with I/Q imbalance, IGS always performs better than PGS even if its  benefits can be minor when the number of antennas increases.
We also observe that, for a fixed number of antennas, the benefits of IGS increase when the number of users grows. 
The reason is that the interference level increases with $K$, and the higher the level of interference, the better the performance for IGS.     
Additionally, our results show that the benefits of IGS increase with the imbalance level. The more improper the noise is, the more benefits can be achieved by IGS.

The main contributions of this paper can be summarized as follows:
\begin{itemize}
\item We propose HWI-aware IGS schemes for the $K$-user MIMO IC. We study two general performance metrics, i.e., the achievable rate and EE, and solve four different optimization problems. We derive a stationary point of the rate region, sum-rate maximization, EE region and global EE maximization problems.

\item To solve these optimization problems, we cast them as a general optimization problem  in which the objective function and/or the constraints are linear functions of the rates. 
We  then apply a unified framework to obtain a stationary point of these  optimization problems by majorization-minimization  algorithms.

\item Our results show that IGS can improve the performance of the  $K$-user MIMO IC with HWI 
in terms of  achievable rate and EE. We show that IGS provides more benefits in terms of achievable rate than in terms of energy efficiency.

\item Our numerical simulations suggest  that 
the  benefit of IGS schemes increases with $K$ and with the level of impairment  for a fixed number of antennas. 
However, IGS provides minor gains with respect to PGS when the number of antennas grows for a fixed number of users.

\end{itemize}
\begin{figure*}[t]
\centering
\includegraphics[width=0.9\textwidth]{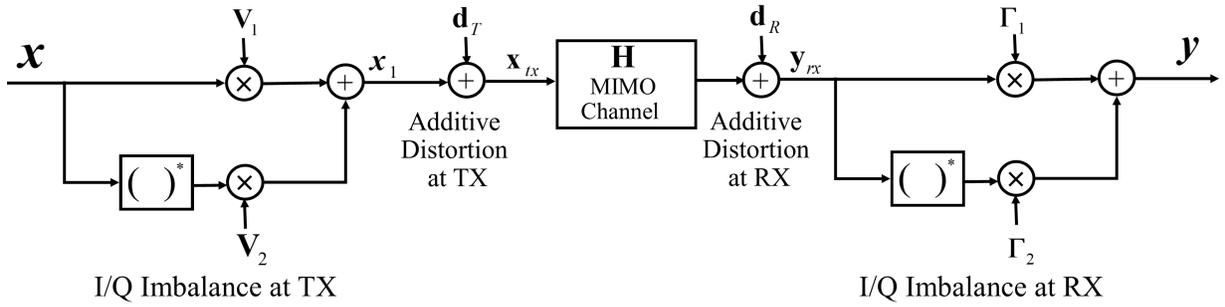}
\caption{The transceiver model of a point-to-point communications link with imperfect devices.}
\label{Fig1-1}
\end{figure*}
\subsection{Paper organization and notations}
This paper is organized as follows. In Section \ref{pre-sec-ii}, we present some background on improper random signals, as well as the system model considered in this work. 
We define the optimization framework for the MIMO IC with HWI at the transceivers in Section \ref{sec-framewrk}. 
We solve the corresponding optimization problems in Section \ref{sec-prob}. 
Finally, Section \ref{sec-v} provides some numerical examples  along with an extensive discussion of them.

{\it Notations and abbreviations}: 
In this paper, matrices are denoted by bold-faced upper case letters, bold-face  lower case letters denote column vectors, and scalars are denoted by light-face lower case letters.
Furthermore, $\text{Tr}(\mathbf{X})$ and $\det(\mathbf{X})$ denote, respectively, the trace and determinant of matrix $\mathbf{X}$. The notations $\mathfrak{R}\{\cdot\}$ and $\mathfrak{I}\{\cdot\}$ return, respectively, the real and imaginary part of $\{\cdot\}$ and can be applied to scalars, vectors and matrices. 
Additionally, $(\cdot)^H$, $(\cdot)^T$ and $(\cdot)^*$ denote, respectively, Hermitian, transpose and conjugate operations. We represent the $N\times N$ identity matrix by $\mathbf{I}_N$. 
Moreover, ${\cal CN}(0,1)$ denotes a proper complex Gaussian distribution with zero mean and unit variance, $\mathbf{x}\sim \mathcal{ CN}(\mathbf{0}, \mathbf{R} )$ denotes a proper complex Gaussian vector with zero mean and covariance $\mathbf{R}$.
Finally, we provide a list of the most frequently used abbreviations in Table \ref{table-1}. 
 \begin{table}\vspace{.2cm}
\caption{List of frequently used abbreviations.}\label{table-1}
\begin{tabular}{|l|l|}
	\hline
CCP&Convex-Concave Procedure\\
DCP& Difference of Convex Programming\\
EE & Energy Efficiency\\
GDA& Generalized Dinkelbach Algorithm\\
HWI& Hardware Impairment\\
IC& Interference Channel\\
IGS & Improper Gaussian Signaling\\
MIMO& Multiple-Input Multiple-Output\\
MM & Majorization Minimization\\
OFDM& Orthogonal Frequency Division Multiplexing\\
PGS& Proper Gaussian Signaling\\
QoS& Quality of Service\\
SNR & Signal to Noise Ratio\\
TIN& Treating Interference as Noise\\
UCR& Underlay Cognitive Radio\\
\hline
		\end{tabular}
\end{table} 

\section{Preliminaries and system model}\label{pre-sec-ii}
We provide some preliminaries on the real-decomposition method in Section \ref{sec-real-de}, and on improper signaling in Section \ref{real-IGS}. 
We then describe the HWI model in Section \ref{sec-hwi-mod}. We finally present the considered scenario in Section \ref{sec-rate}.

\subsection{Real decomposition of a complex system}\label{sec-real-de}
Consider the following point-to-point MIMO communication system
\begin{equation}
\mathbf{y}=\mathbf{H}\mathbf{x}+\mathbf{n},
\end{equation}
where $\mathbf{y}\in \mathbb{C}^{N_R\times 1}$, $\mathbf{x}\in \mathbb{C}^{N_T\times 1}$,  $\mathbf{n}\in \mathbb{C}^{N_R\times 1}$, and $\mathbf{H}\in \mathbb{C}^{N_R\times N_T}$ are, respectively, the received signal, transmitted signal, noise vector, and the channel matrix. 
The real decomposition model for the link is 
\begin{equation}
\left[\! \begin{array}{c}
\mathfrak{R}\{\mathbf{y}\} \\ \mathfrak{I}\{\mathbf{y}\} \end{array} \!\right]\!=\!
\left[\! \begin{array}{cc}
\mathfrak{R}\{\mathbf{H}\}& -\mathfrak{I}\{\mathbf{H}\} \\ \mathfrak{I}\{\mathbf{H}\}&
\mathfrak{R}\{\mathbf{H}\} \end{array} \right]\left[\! \begin{array}{c}
\mathfrak{R}\{\mathbf{x}\} \\ \mathfrak{I}\{\mathbf{x}\} \end{array}\! \right]\!+\!\left[\! \begin{array}{cc}\mathfrak{R}\{\mathbf{n}\} \\ \mathfrak{I}\{\mathbf{n}\} \end{array}\! \right]\!.
\end{equation}
Assume $\mathbf{n}$ is a random vector with Gaussian distribution as $\mathbf{n}\sim\mathcal{CN}(\mathbf{0},\mathbf{C}_{n})$.
The achievable rate of the system is \cite{cover2012elements}
\begin{equation}\label{rate-eq}
R_k=\frac{1}{2}\log_2\det\left(\underline{\mathbf{C}}_{n}+\underline{\mathbf{H}}\mathbf{P} \underline{\mathbf{H}}^T\right)-\frac{1}{2}\log_2\det\left(\underline{\mathbf{C}}_{n}\right),
\end{equation}
where $\underline{\mathbf{C}}_{n}$ is the covariance matrix of $[ \begin{array}{cc}\mathfrak{R}\{\mathbf{n}\}^T & \mathfrak{I}\{\mathbf{n}\}^T \end{array}]^T$, 
$\mathbf{P}$ is the covariance matrix of $[ \begin{array}{cc}\mathfrak{R}\{\mathbf{x}\}^T & \mathfrak{I}\{\mathbf{x}\}^T \end{array}]^T$, and $\underline{\mathbf{H}}$ is
\begin{equation}
\underline{\mathbf{H}}=\left[ \begin{array}{cc}
\mathfrak{R}\{\mathbf{H}\}& -\mathfrak{I}\{\mathbf{H}\} \\ \mathfrak{I}\{\mathbf{H}\}&
\mathfrak{R}\{\mathbf{H}\} \end{array} \right].
\end{equation}
\subsection{Preliminaries of IGS}\label{real-IGS}
A zero-mean complex Gaussian random variable $x$ with variance $p_t=\mathbb{E}\{|t|^2\}$ is called proper if $\mathbb{E}\{t^2\}=0$; otherwise, it is improper \cite{schreier2010statistical,adali2011complex}.  
When the variable $t$ is improper, its real and imaginary parts are not independent and identically distributed (i.i.d). 
We can extend the definition of improper scalar variables to vectors. A zero-mean complex Gaussian random vector $\mathbf{t}\in\mathbb{C}^{N\times 1}$ with covariance matrix $\mathbf{P}=\mathbb{E}\{\mathbf{t}\mathbf{t}^H\}$ is called proper if $\mathbb{E}\{\mathbf{t}\mathbf{t}^T\}=\mathbf{0}$; otherwise, it is improper \cite{schreier2010statistical,adali2011complex}.

To deal with improper signals, there are generally two approaches: augmented covariance matrix \cite{adali2011complex} and real decomposition method \cite{hellings2015block}.
In the augmented-covariance-matrix approach, complex-domain signals are considered, and the optimized variables are covariance and complementary covariance matrices. 
However, in the real-decomposition method, every variable is written in the real domain, and the optimization variable is the covariance matrix of the real decomposition of the signals. 
The main differences of these two approaches are in the structure of the optimization variables as well as in the corresponding optimization problems. That is, a complementary covariance matrix has to follow a specific structure for improper signals, while the real covariance matrices are required to be only  positive semi-definite. 
On the one hand, the use of the augmented-covariance-matrix approach can provide insights in some problems. For example, we might be able to derive some conditions for optimality of proper or improper signaling by considering complementary variances. 
On the other hand, depending on the scenario, the optimization over the real domain might be simpler. 
For instance, in MIMO systems, the achievable rate by IGS is a complicated function of the covariance and complementary covariance matrices (please refer to \cite[Eq. (10)]{zeng2013transmit} or \cite[Section III]{javed2019multiple}). 
  Then, using the standard complex formulation, it is not possible to express the rates as a concave/convex function or a difference of two concave functions in the optimization parameters, which makes the analysis intractable. This  is in contrast to the real-decomposition method by means of  which the rates can be written as a difference of two concave functions, as will be shown in Section \ref{sec-rate}. Therefore, the real-decomposition method is used to simplify the optimization problems throughout this paper.

It is worth emphasizing that, in the real decomposition model, an improper random vector can have any arbitrary symmetric and positive semi-definite covariance matrix. However, a proper Gaussian signal has a covariance matrix patterned as  \cite{schreier2010statistical}
\begin{align}\nonumber
\mathbf{P}
&=\mathbb{E}\left\{\left[\begin{array}{cc}\mathfrak{R}\{\mathbf{t}\}^T& \mathfrak{I}\{\mathbf{t}\}^T\end{array}\right]^T\left[\begin{array}{cc}\mathfrak{R}\{\mathbf{t}\}^T& \mathfrak{I}\{\mathbf{t}\}^T\end{array}\right]\right\}\\
&=\left[\begin{array}{cc}\mathbf{A}&\mathbf{B}\\ \mathbf{B}&\mathbf{A} \end{array}\right],\label{struc-pgs-cv}
\end{align}
where $\mathbf{A}\in \mathbb{R}^{N\times N}$ is symmetric and  positive semi-definite, and   $\mathbf{B}\in \mathbb{R}^{N\times N}$ is skew-symmetric, i.e., $\mathbf{B} = -\mathbf{B}^T$, which implies that its diagonal elements are zero.

\subsection{HWI model for MIMO systems}\label{sec-hwi-mod}
 
In this subsection, we present the HWI model for a MIMO system with $N_T$ transmitter antennas and $N_R$ receiver antennas (see Fig. \ref{Fig1-1}). 
We employ the non-ideal-hardware model in \cite{javed2019multiple} and assume that the transceivers suffer from I/Q imbalance and generate additive distortion noise. For the sake of completeness, we briefly present the model in \cite{javed2019multiple} in this subsection. 

The I/Q imbalance at the transmitter side is modeled as a widely linear transformation of the transmit signal $\mathbf{x}\in \mathbb{C}^{N_T\times 1}$ as 
\begin{equation}
\mathbf{x}_1=\mathbf{V}_1\mathbf{x}+\mathbf{V}_2\mathbf{x}^*,
\end{equation} 
where the matrices $\mathbf{V}_1\in \mathbb{C}^{N_T\times N_T}$ and
$\mathbf{V}_2 \in \mathbb{C}^{N_T\times N_T}$ capture the amplitude and rotational  imbalance and can be expressed as \cite{javed2019multiple}
\begin{align}
\mathbf{V}_1&=\frac{\mathbf{I}_{N_T}+\mathbf{A}_Te^{j \bm{\theta}_T}}{2},\\
\mathbf{V}_2&=\mathbf{I}_{N_T}-\mathbf{V}_1^*=\frac{\mathbf{I}_{N_T}-\mathbf{A}_Te^{-j \bm{\theta}_T}}{2}.
\end{align}
Moreover, the matrices $\mathbf{A}_T$ and $\bm{\theta}_T$ 
are diagonal and, respectively, reflect the amplitude and phase errors of each branch at the transmitter side \cite{javed2019multiple}. There are different methods to estimate the parameters of I/Q imbalance \cite{chung2010joint,gil2005joint,sung2009estimation,cai2011estimation}.
There is no I/Q imbalance if $\mathbf{A}_T=\mathbf{I}$ and $\bm{\theta}_T=\mathbf{0}$ or equivalently, $\mathbf{V}_1=\mathbf{I}$ and $\mathbf{V}_2=\mathbf{0}$.

We also assume that the transmitter is not perfect and may generate an additive proper Gaussian noise  in addition to the I/Q imbalance with probability distribution $\mathbf{d}_T\in \mathbb{C}^{N_T\times 1}\sim\mathcal{CN}(\mathbf{0},\mathbf{C}_T)$ \cite{javed2019multiple}. 
Hence, the transmitted signal is
\begin{equation}
\mathbf{x}_{tx}=\mathbf{x}_1+\mathbf{d}_T.
\end{equation}
The transmitted signal is delivered to the receiver over a MIMO channel with additive white Gaussian noise. 
Hence, the received signal 
is
\begin{equation}
\mathbf{y}_{rx}=\mathbf{H}\mathbf{x}_{tx}+\mathbf{d}_R,
\end{equation}
where the vector
 $\mathbf{d}_R\in \mathbb{C}^{N_R\times 1}\sim\mathcal{CN}(\mathbf{0},\mathbf{C}_R)$ accounts for the aggregate effect of the additive white Gaussian noise of the channel and the additive distortion of the receive devices. 
 The receiver can suffer from an I/Q imbalance similar to the transmitter.  
Thus, the received signal after I/Q imbalance is
\begin{equation}
\mathbf{y}=\mathbf{\Gamma}_1\mathbf{y}_{rx}+\mathbf{\Gamma}_2\mathbf{y}_{rx}^*,
\end{equation}
where the matrices 
$\mathbf{\Gamma}_1\in \mathbb{C}^{N_R\times N_R}$ and
$\mathbf{\Gamma}_2\in \mathbb{C}^{N_R\times N_R}$ are, respectively, given by
\begin{align}
\mathbf{\Gamma}_1&=\dfrac{\mathbf{I}_{N_R}+\mathbf{A}_Re^{j \bm{\theta}_R}}{2}\\
\mathbf{\Gamma}_2&=\mathbf{I}_{N_R}-\mathbf{\Gamma}_1^*=\frac{\mathbf{I}_{N_R}-\mathbf{A}_Re^{-j \bm{\theta}_R}}{2}.
\end{align}
Similar to $\mathbf{A}_T$ and $\bm{\theta}_T$ , the matrices $\mathbf{A}_R$ and $\bm{\theta}_R$ 
are diagonal and, respectively, reflect the amplitude and phase errors of each branch at the receiver side \cite{javed2019multiple}. 
The following lemmas present the aggregate effect of the impairments at the transmitter and receiver sides.
\begin{lemma}[\!\cite{javed2019multiple}]\label{lem-1}
The transceiver of a MIMO system with HWI can be modeled as
\begin{equation}\label{eq-hwi}
\mathbf{y}=\bar{\mathbf{H}}_1\mathbf{x}+\bar{\mathbf{H}}_2\mathbf{x}^*+\mathbf{z},
\end{equation}
where
\begin{align}
\label{eq-h1} \bar{\mathbf{H}}_1&=\mathbf{\Gamma}_1\mathbf{H}\mathbf{V}_1+\mathbf{\Gamma}_2\mathbf{H}^*\mathbf{V}_2^*\in \mathbb{C}^{N_R\times N_T},\\
\label{eq-h2}\bar{\mathbf{H}}_2&=\mathbf{\Gamma}_1\mathbf{H}\mathbf{V}_2+\mathbf{\Gamma}_2\mathbf{H}^*\mathbf{V}_1^*\in \mathbb{C}^{N_R\times N_T},\\
\mathbf{z}&=\mathbf{\Gamma}_1(\mathbf{H}\mathbf{d}_T+\mathbf{d}_R)+\mathbf{\Gamma}_2(\mathbf{H}\mathbf{d}_T+\mathbf{d}_R)^*\in \mathbb{C}^{N_R\times 1}.\label{eq-noise}
\end{align}
\end{lemma}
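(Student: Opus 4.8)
The plan is to prove the lemma by direct substitution: I would chain together the four transformations the signal undergoes on its way from the transmitter input $\mathbf{x}$ to the receiver output $\mathbf{y}$, and then collect the terms that multiply $\mathbf{x}$, $\mathbf{x}^*$, and the distortion vectors. No approximation or probabilistic reasoning is needed; the claim is an exact algebraic identity.

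First I would substitute the transmit-side I/Q-imbalance model $\mathbf{x}_1=\mathbf{V}_1\mathbf{x}+\mathbf{V}_2\mathbf{x}^*$ together with the additive transmit distortion into $\mathbf{x}_{tx}=\mathbf{x}_1+\mathbf{d}_T$, obtaining $\mathbf{x}_{tx}=\mathbf{V}_1\mathbf{x}+\mathbf{V}_2\mathbf{x}^*+\mathbf{d}_T$. Next I would push this through the channel, $\mathbf{y}_{rx}=\mathbf{H}\mathbf{x}_{tx}+\mathbf{d}_R=\mathbf{H}\mathbf{V}_1\mathbf{x}+\mathbf{H}\mathbf{V}_2\mathbf{x}^*+\mathbf{w}$, where it is convenient to set $\mathbf{w}=\mathbf{H}\mathbf{d}_T+\mathbf{d}_R$ for the aggregate additive noise entering the receiver. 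Then I would apply the receive-side widely linear transformation $\mathbf{y}=\mathbf{\Gamma}_1\mathbf{y}_{rx}+\mathbf{\Gamma}_2\mathbf{y}_{rx}^*$: expanding $\mathbf{\Gamma}_1\mathbf{y}_{rx}$ is immediate, while for $\mathbf{\Gamma}_2\mathbf{y}_{rx}^*$ I would use that elementwise conjugation distributes over matrix products, so that $(\mathbf{H}\mathbf{V}_1\mathbf{x})^*=\mathbf{H}^*\mathbf{V}_1^*\mathbf{x}^*$ and $(\mathbf{H}\mathbf{V}_2\mathbf{x}^*)^*=\mathbf{H}^*\mathbf{V}_2^*\mathbf{x}$ (note that no transpose appears, only conjugation). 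Collecting the coefficient of $\mathbf{x}$ gives $\mathbf{\Gamma}_1\mathbf{H}\mathbf{V}_1+\mathbf{\Gamma}_2\mathbf{H}^*\mathbf{V}_2^*=\bar{\mathbf{H}}_1$, the coefficient of $\mathbf{x}^*$ gives $\mathbf{\Gamma}_1\mathbf{H}\mathbf{V}_2+\mathbf{\Gamma}_2\mathbf{H}^*\mathbf{V}_1^*=\bar{\mathbf{H}}_2$, and the remaining terms are $\mathbf{\Gamma}_1\mathbf{w}+\mathbf{\Gamma}_2\mathbf{w}^*=\mathbf{z}$, which matches \eqref{eq-h1}--\eqref{eq-noise}.

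There is no genuine obstacle here; the result is essentially bookkeeping. The only step where care is required is tracking which quantities get conjugated when the second receiver branch $\mathbf{\Gamma}_2(\cdot)^*$ acts on a sum that already contains both an $\mathbf{x}$ term and an $\mathbf{x}^*$ term: a term carrying $\mathbf{x}^*$ before conjugation reappears carrying $\mathbf{x}$ afterwards, which is precisely why $\bar{\mathbf{H}}_1$ acquires the $\mathbf{\Gamma}_2\mathbf{H}^*\mathbf{V}_2^*$ contribution (and symmetrically for $\bar{\mathbf{H}}_2$). Once this cross-assignment is handled consistently, the claimed expressions follow by inspection, and the definitions of $\mathbf{V}_1,\mathbf{V}_2,\mathbf{\Gamma}_1,\mathbf{\Gamma}_2$ in terms of $\mathbf{A}_T,\bm{\theta}_T,\mathbf{A}_R,\bm{\theta}_R$ are not even needed at this stage.
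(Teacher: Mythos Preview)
Your proposal is correct; the lemma is indeed a pure algebraic identity obtained by composing the four stages and sorting terms, and your handling of the conjugation in the $\mathbf{\Gamma}_2(\cdot)^*$ branch is exactly right. The paper itself does not give an independent proof but simply refers to \cite{javed2019multiple}, so your direct-substitution argument is precisely the expected derivation.
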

\begin{proof}
Please refer to \cite[Lemma 1]{javed2019multiple}.
\end{proof}
\begin{lemma}\label{lem:Rea}
The real decomposition  of the MIMO system with HWI in Lemma \ref{lem-1} is
\begin{equation}\label{eq-hwi-2}
\underline{\mathbf{y}}=
\tilde{\mathbf{H}}
\underline{\mathbf{x}}+\underline{\mathbf{z}},
\end{equation}
where $\underline{\mathbf{y}}=\left[ \begin{array}{cc}
\mathfrak{R}\{\mathbf{y}\}^T & \mathfrak{I}\{\mathbf{y}\}^T \end{array} \right]^T$,  
$\underline{\mathbf{x}}=\left[ \begin{array}{cc}
\mathfrak{R}\{\mathbf{x}\}^T & \mathfrak{I}\{\mathbf{x}\}^T \end{array} \right]^T$, and 
$\underline{\mathbf{z}}=\left[ \begin{array}{cc}
\mathfrak{R}\{\mathbf{z}\}^T & \mathfrak{I}\{\mathbf{z}\}^T \end{array} \right]^T$ are, respectively, the real decomposition of $\mathbf{y}$, $\mathbf{x}$, and $\mathbf{z}$ in \eqref{eq-hwi}. 
Moreover, $\tilde{\mathbf{H}}$ is
\begin{equation}
\tilde{\mathbf{H}}=\left[ \begin{array}{cc}
\mathfrak{R}\{\bar{\mathbf{H}}_1+\bar{\mathbf{H}}_2\}& -\mathfrak{I}\{\bar{\mathbf{H}}_1-\bar{\mathbf{H}}_2\} \\ \mathfrak{I}\{\bar{\mathbf{H}}_1+\bar{\mathbf{H}}_2\}&
\mathfrak{R}\{\bar{\mathbf{H}}_1-\bar{\mathbf{H}}_2\} \end{array} \right].
\end{equation}
The statistics of the vector $\underline{\mathbf{z}}\in \mathbb{R}^{2N_R\times 1}$ are $\mathbb{E}\{\underline{\mathbf{z}}\}=\mathbf{0}$, and
\begin{align}\label{noise-var}
\mathbb{E}\{\underline{\mathbf{z}}\,\underline{\mathbf{z}}^T\}&=\underline{\mathbf{C}}_z= \underline{\mathbf{\Gamma}}
\mathbf{C}_d
\underline{\mathbf{\Gamma}}^T,
\end{align}
where
$\mathbf{C}_d=\underline{\mathbf{H}}\,\underline{\mathbf{C}}_T\underline{\mathbf{H}}^T+\underline{\mathbf{C}}_R,$
 and
 \begin{equation}\label{gamma-eq}
\underline{\mathbf{\Gamma}}\triangleq \left[ \begin{array}{cc}
\mathfrak{R}\{\mathbf{\Gamma}_1+\mathbf{\Gamma}_2\}& -\mathfrak{I}\{\mathbf{\Gamma}_1-\mathbf{\Gamma}_2\} \\ \mathfrak{I}\{\mathbf{\Gamma}_1+\mathbf{\Gamma}_2\}&
\mathfrak{R}\{\mathbf{\Gamma}_1+\mathbf{\Gamma}_2\} \end{array} \right].
\end{equation}
 Additionally, $\underline{\mathbf{H}}$, $\underline{\mathbf{C}}_T$, and $\underline{\mathbf{C}}_R$ are, respectively, the real decomposition of $\mathbf{H}$, $\mathbf{C}_T$, and $\mathbf{C}_R$. 
For example, if $\mathbf{C}_T=\sigma^2\mathbf{I}_{N_T}$, then $\underline{\mathbf{C}}_T=\frac{1}{2}\sigma^2\mathbf{I}_{2N_T}$.
\end{lemma}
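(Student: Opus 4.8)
The plan is to take the complex-domain model \eqref{eq-hwi} from Lemma \ref{lem-1}, namely $\mathbf{y}=\bar{\mathbf{H}}_1\mathbf{x}+\bar{\mathbf{H}}_2\mathbf{x}^*+\mathbf{z}$, and push it through the standard real-decomposition map $\mathbf{v}\mapsto[\mathfrak{R}\{\mathbf{v}\}^T\ \mathfrak{I}\{\mathbf{v}\}^T]^T$. First I would recall the two elementary facts that drive everything: for a complex matrix $\mathbf{M}$ and vector $\mathbf{v}$, the product $\mathbf{M}\mathbf{v}$ has real decomposition $\bigl[\begin{smallmatrix}\mathfrak{R}\{\mathbf{M}\}&-\mathfrak{I}\{\mathbf{M}\}\\ \mathfrak{I}\{\mathbf{M}\}&\mathfrak{R}\{\mathbf{M}\}\end{smallmatrix}\bigr]\underline{\mathbf{v}}$ (this is the map already used in Section \ref{sec-real-de}), whereas the product $\mathbf{M}\mathbf{v}^*$ has real decomposition $\bigl[\begin{smallmatrix}\mathfrak{R}\{\mathbf{M}\}&\mathfrak{I}\{\mathbf{M}\}\\ \mathfrak{I}\{\mathbf{M}\}&-\mathfrak{R}\{\mathbf{M}\}\end{smallmatrix}\bigr]\underline{\mathbf{v}}$, because conjugation flips the sign of the imaginary part of $\mathbf{v}$. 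Applying the first identity to $\bar{\mathbf{H}}_1\mathbf{x}$ and the second to $\bar{\mathbf{H}}_2\mathbf{x}^*$ and adding, the coefficient of $\mathfrak{R}\{\mathbf{x}\}$ becomes $\mathfrak{R}\{\bar{\mathbf{H}}_1+\bar{\mathbf{H}}_2\}$ in the top block and $\mathfrak{I}\{\bar{\mathbf{H}}_1+\bar{\mathbf{H}}_2\}$ in the bottom block, while the coefficient of $\mathfrak{I}\{\mathbf{x}\}$ becomes $-\mathfrak{I}\{\bar{\mathbf{H}}_1-\bar{\mathbf{H}}_2\}$ on top and $\mathfrak{R}\{\bar{\mathbf{H}}_1-\bar{\mathbf{H}}_2\}$ on the bottom. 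This is exactly the claimed $\tilde{\mathbf{H}}$, and the noise term simply carries over as $\underline{\mathbf{z}}$, giving \eqref{eq-hwi-2}.

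Next I would compute the statistics of $\underline{\mathbf{z}}$. From \eqref{eq-noise}, $\mathbf{z}=\mathbf{\Gamma}_1\mathbf{w}+\mathbf{\Gamma}_2\mathbf{w}^*$ with $\mathbf{w}\triangleq\mathbf{H}\mathbf{d}_T+\mathbf{d}_R$. Zero mean is immediate since $\mathbf{d}_T,\mathbf{d}_R$ are zero-mean. For the second-order part, I would again apply the real-decomposition map to the linear-conjugate-linear combination $\mathbf{\Gamma}_1\mathbf{w}+\mathbf{\Gamma}_2\mathbf{w}^*$: by the same two identities as above, $\underline{\mathbf{z}}=\underline{\mathbf{\Gamma}}\,\underline{\mathbf{w}}$ with $\underline{\mathbf{\Gamma}}$ the matrix in \eqref{gamma-eq} (here one checks that the top-right block is $-\mathfrak{I}\{\mathbf{\Gamma}_1-\mathbf{\Gamma}_2\}$ and that, because $\mathbf{\Gamma}_1+\mathbf{\Gamma}_2=\mathbf{I}_{N_R}-\mathbf{\Gamma}_1^*+\mathbf{\Gamma}_1$ is used together with the structure of the decomposition, the bottom-right block collapses to $\mathfrak{R}\{\mathbf{\Gamma}_1+\mathbf{\Gamma}_2\}$ as written — this block-structure bookkeeping is the one place to be careful). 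Then $\underline{\mathbf{C}}_z=\mathbb{E}\{\underline{\mathbf{z}}\,\underline{\mathbf{z}}^T\}=\underline{\mathbf{\Gamma}}\,\mathbb{E}\{\underline{\mathbf{w}}\,\underline{\mathbf{w}}^T\}\,\underline{\mathbf{\Gamma}}^T$. It remains to identify $\mathbf{C}_d\triangleq\mathbb{E}\{\underline{\mathbf{w}}\,\underline{\mathbf{w}}^T\}$. Since $\mathbf{d}_T\sim\mathcal{CN}(\mathbf{0},\mathbf{C}_T)$ and $\mathbf{d}_R\sim\mathcal{CN}(\mathbf{0},\mathbf{C}_R)$ are independent and proper, their real decompositions are independent zero-mean real Gaussians with covariances $\underline{\mathbf{C}}_T$ and $\underline{\mathbf{C}}_R$, and $\underline{\mathbf{H}\mathbf{d}_T}=\underline{\mathbf{H}}\,\underline{\mathbf{d}}_T$, so $\mathbf{C}_d=\underline{\mathbf{H}}\,\underline{\mathbf{C}}_T\underline{\mathbf{H}}^T+\underline{\mathbf{C}}_R$, as claimed. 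The example $\mathbf{C}_T=\sigma^2\mathbf{I}_{N_T}\Rightarrow\underline{\mathbf{C}}_T=\tfrac12\sigma^2\mathbf{I}_{2N_T}$ then follows from the fact that a proper complex Gaussian splits its variance equally between real and imaginary parts.

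The main obstacle is not any deep argument but the careful block-matrix bookkeeping: getting all eight $\pm\mathfrak{R},\pm\mathfrak{I}$ blocks in $\tilde{\mathbf{H}}$ and in $\underline{\mathbf{\Gamma}}$ correct, and in particular verifying that the seemingly asymmetric-looking bottom-right entry of $\underline{\mathbf{\Gamma}}$ in \eqref{gamma-eq} really is $\mathfrak{R}\{\mathbf{\Gamma}_1+\mathbf{\Gamma}_2\}$ rather than $\mathfrak{R}\{\mathbf{\Gamma}_1-\mathbf{\Gamma}_2\}$ once the $\mathbf{\Gamma}_2\mathbf{w}^*$ sign flips are accounted for. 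I would handle this by writing $\mathbf{z}$ (and separately the noise-free part) as a single expression $\mathbf{M}_1\mathbf{v}+\mathbf{M}_2\mathbf{v}^*$ and invoking once and for all the general rule that its real decomposition is $\bigl[\begin{smallmatrix}\mathfrak{R}\{\mathbf{M}_1+\mathbf{M}_2\}&-\mathfrak{I}\{\mathbf{M}_1-\mathbf{M}_2\}\\ \mathfrak{I}\{\mathbf{M}_1+\mathbf{M}_2\}&\mathfrak{R}\{\mathbf{M}_1-\mathbf{M}_2\}\end{smallmatrix}\bigr]\underline{\mathbf{v}}$, then substituting $(\mathbf{M}_1,\mathbf{M}_2)=(\bar{\mathbf{H}}_1,\bar{\mathbf{H}}_2)$ for the channel and $(\mathbf{M}_1,\mathbf{M}_2)=(\mathbf{\Gamma}_1,\mathbf{\Gamma}_2)$ for the noise, noting that $\mathbf{\Gamma}_1-\mathbf{\Gamma}_2=\mathbf{A}_Re^{j\bm\theta_R}$ is such that its relevant combination reproduces \eqref{gamma-eq}; everything else is substitution and the independence of the distortion terms.
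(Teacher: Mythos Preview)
Your approach is essentially identical to the paper's: write $\underline{\mathbf{z}}=\underline{\mathbf{\Gamma}}\,\underline{\mathbf{w}}$ with $\underline{\mathbf{w}}=\underline{\mathbf{H}}\,\underline{\mathbf{d}}_T+\underline{\mathbf{d}}_R$, observe zero mean, and read off the covariance as $\underline{\mathbf{\Gamma}}\,\mathbf{C}_d\,\underline{\mathbf{\Gamma}}^T$ with $\mathbf{C}_d=\underline{\mathbf{H}}\,\underline{\mathbf{C}}_T\underline{\mathbf{H}}^T+\underline{\mathbf{C}}_R$; your added value is making explicit the general widely-linear rule $\mathbf{M}_1\mathbf{v}+\mathbf{M}_2\mathbf{v}^*\mapsto\bigl[\begin{smallmatrix}\mathfrak{R}\{\mathbf{M}_1+\mathbf{M}_2\}&-\mathfrak{I}\{\mathbf{M}_1-\mathbf{M}_2\}\\ \mathfrak{I}\{\mathbf{M}_1+\mathbf{M}_2\}&\mathfrak{R}\{\mathbf{M}_1-\mathbf{M}_2\}\end{smallmatrix}\bigr]\underline{\mathbf{v}}$, which the paper leaves implicit.

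One caution: do not try to argue that the bottom-right block of $\underline{\mathbf{\Gamma}}$ in \eqref{gamma-eq} is genuinely $\mathfrak{R}\{\mathbf{\Gamma}_1+\mathbf{\Gamma}_2\}$. Your own general rule (and the analogous $\tilde{\mathbf{H}}$ just above it) gives $\mathfrak{R}\{\mathbf{\Gamma}_1-\mathbf{\Gamma}_2\}$ there, and a direct check with $\mathbf{\Gamma}_1+\mathbf{\Gamma}_2=\mathbf{I}+j\mathbf{A}_R\sin\bm{\theta}_R$ versus $\mathbf{\Gamma}_1-\mathbf{\Gamma}_2=\mathbf{A}_R\cos\bm{\theta}_R$ shows these real parts differ whenever $\mathbf{A}_R\cos\bm{\theta}_R\neq\mathbf{I}$. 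The displayed $+$ is almost certainly a typographical slip in the statement, not something your derivation needs to reproduce; just apply your rule with $(\mathbf{M}_1,\mathbf{M}_2)=(\mathbf{\Gamma}_1,\mathbf{\Gamma}_2)$ and move on.
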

\begin{proof} 
We can easily construct the real decomposition model in \eqref{eq-hwi-2} from the complex model in  \eqref{eq-hwi}. Now we would like to derive the statistics of $\underline{\mathbf{z}}$ in  \eqref{eq-hwi-2}. To this end, we first write the real decomposition of $\mathbf{z}$ in \eqref{eq-noise} as
\begin{align}\nonumber
\left[ \begin{array}{c}
\mathfrak{R}\{\mathbf{z}\} \\ \mathfrak{I}\{\mathbf{z}\} \end{array} \right]&=
\left[ \begin{array}{cc}
\mathfrak{R}\{\mathbf{\Gamma}_1+\mathbf{\Gamma}_2\}& -\mathfrak{I}\{\mathbf{\Gamma}_1-\mathbf{\Gamma}_2\} \\ \mathfrak{I}\{\mathbf{\Gamma}_1+\mathbf{\Gamma}_2\}&
\mathfrak{R}\{\mathbf{\Gamma}_1+\mathbf{\Gamma}_2\} \end{array} \right]\\
&\times
\left[ \begin{array}{c}
\mathfrak{R}\{\mathbf{H}\mathbf{d}_T+\mathbf{d}_R\} \\ \mathfrak{I}\{\mathbf{H}\mathbf{d}_T+\mathbf{d}_R\} \end{array} \right],
\end{align}
which can be represented as $\underline{\mathbf{z}}=\underline{\mathbf{\Gamma}}\left(\underline{\mathbf{H}}\,\underline{\mathbf{d}}_T+\underline{\mathbf{d}}_R\right)$, where $\underline{\mathbf{d}}_T$ and $\underline{\mathbf{d}}_R$ are, respectively, the real decomposition of $\mathbf{d}_T$ and $\mathbf{d}_R$.
The average of $\underline{\mathbf{z}}$ is simply $\mathbf{0}$, since $\mathbf{d}_R$ and $\mathbf{d}_T$ are  zero-mean random vectors. 
Furthermore, the covariance matrix of $\underline{\mathbf{z}}$ can be derived as \eqref{noise-var}.

\end{proof}

\subsection{System model}\label{sec-rate}
\begin{figure}[t]
\centering
\includegraphics{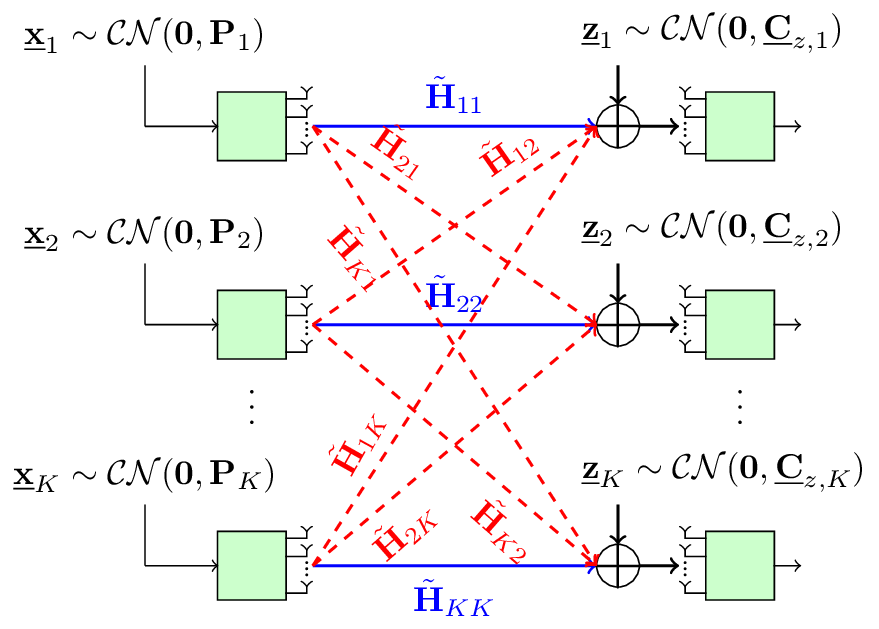}
\caption{The equivalent real-decomposition channel model for the  $K$-user MIMO IC.}
\label{Fig2}
\end{figure}
We consider a  $K$-user MIMO IC with imperfect transceivers, as shown in Fig. \ref{Fig2}. 
Without loss of generality, we assume that the transceivers have the same number of antennas and produce a noise with the same statistics to simplify the notation and the expressions. Obviously, it is very straightforward to extend this model to the most general case with asymmetric devices.
According to Lemma 2, the real decomposition of the received signal at the receiver of user $k$ is
\begin{equation}
\underline{\mathbf{y}}_k=\sum_{i=1}^{K}\tilde{\mathbf{H}}_{ki}\underline{\mathbf{x}}_i+\underline{\mathbf{z}}_k
\end{equation}
where $\underline{\mathbf{x}}_i$ is the real decomposition of the transmitted signal of user $i$, and
\begin{equation}
\tilde{\mathbf{H}}_{ki}=\left[ \begin{array}{cc}
\mathfrak{R}\{\bar{\mathbf{H}}_{1,ki}+\bar{\mathbf{H}}_{2,ki}\}& -\mathfrak{I}\{\bar{\mathbf{H}}_{1,ki}-\bar{\mathbf{H}}_{2,ki}\} \\ \mathfrak{I}\{\bar{\mathbf{H}}_{1,ki}+\bar{\mathbf{H}}_{2,ki}\}&
\mathfrak{R}\{\bar{\mathbf{H}}_{1,ki}-\bar{\mathbf{H}}_{2,ki}\} \end{array} \right],
\end{equation}
where $\bar{\mathbf{H}}_{1,ki}$ and $\bar{\mathbf{H}}_{2,ki}$ can be derived, respectively, by replacing $\mathbf{H}_{ki}$ in \eqref{eq-h1} and \eqref{eq-h2}. 
Note that $\mathbf{H}_{ki}$ is the channel matrix for the link between transmitter $i$ and receiver $k$.
Moreover, $\underline{\mathbf{z}}_k$ is the real decomposition of the noise vector $\mathbf{z}_k$, which is given by
\begin{equation}
\mathbf{z}_k\!=\!\mathbf{\Gamma}_1\!\left(\sum_{i=1}^{K}\mathbf{H}_{ki}\mathbf{d}_{T,i}\!+\mathbf{d}_{R,k}\!\!\right)+\mathbf{\Gamma}_2\!\left(\sum_{i=1}^{K}\mathbf{H}_{ki}\mathbf{d}_{T,i}\!+\mathbf{d}_{R,k}\!\!\right)^*\!\!.
\end{equation}
According to Lemma \ref{lem:Rea}, the covariance matrix of $\underline{\mathbf{z}}_k$ is
\begin{equation}
\underline{\mathbf{C}}_{z,k}=\underline{\mathbf{\Gamma}}
\left(\sum_{i=1}^K\underline{\mathbf{H}}_{ki}\underline{\mathbf{C}}_{T}\underline{\mathbf{H}}_{ki}^T+\underline{\mathbf{C}}_{R}
\right)\underline{\mathbf{\Gamma}}^T,
\end{equation}
where $\underline{\mathbf{H}}_{ki}$ is the real decomposition of $\mathbf{H}_{ki}$, and $\underline{\mathbf{\Gamma}}$ is given by \eqref{gamma-eq}.
Treating interference as noise,
we can derive the rate of user $k\in\{1,2,...,K\}$ as \cite{cover2012elements, schreier2010statistical,you2020spectral}
\begin{align}\nonumber
R_k&=
\underbrace{\frac{1}{2}\log_2\det\left(\underline{\mathbf{C}}_{z,k}+\sum_{i=1}^{K}\tilde{\mathbf{H}}_{ki}\mathbf{P}_i \tilde{\mathbf{H}}_{ki}^T\right)}
_{\triangleq\,r_{k,1}}\\
&-
\underbrace{\frac{1}{2}\log_2\det\left(\underline{\mathbf{C}}_{z,k}+\sum_{i=1,i\neq k}^{K}\tilde{\mathbf{H}}_{ki}\mathbf{P}_i \tilde{\mathbf{H}}_{ki}^T\right)}_{\triangleq\,r_{k,2}}.
\label{rate-eq}
\end{align}
As can be observed through \eqref{rate-eq}, the rate of user $k$ is a difference of two concave functions, 
i.e., $R_k=r_{k,1}-r_{k,2}$, where $r_{k,1}$ and $r_{k,2}$ are concave. 
This feature allows us to employ MM and convex/concave procedure (CCP) for optimization problems in which the objective and/or constraints are linear functions of the rates as will be shown in Section \ref{sec-framewrk} and Section \ref{sec-prob}. 

\section{Optimization framework for MIMO systems based on MM}\label{sec-framewrk}
In this section, we present a framework based on MM to solve a family of optimization problems in which either the objective function and/or the constraints are linear functions of the rates.
In this approach, we exploit the fact that the rate is a difference of two concave functions and solve the corresponding optimization problem iteratively. 
To this end, we apply the CCP to the rates and approximate the convex part of the rates, $-r_{k,2}$, by a linear function through a first-order Taylor expansion. 

This framework can be applied to both IGS and PGS schemes. The only difference of IGS and PGS schemes in the framework is the feasibility set of the covariance matrices. 
As indicated in Section \ref{real-IGS}, an improper Gaussian random variable can have an arbitrary symmetric and positive semi-definite covariance matrix. 
Thus, the feasibility set of the covariance matrices of users $\{\mathbf{P}_k\}_{k=1}^K$ for IGS is 
\begin{equation}
\mathcal{P}_{IGS}= \left\{ \{\mathbf{P}_k\}_{k=1}^K :\text{Tr}(\mathbf{P}_k)\leq P_k, \,\, \mathbf{P}_k \succcurlyeq\mathbf{0},\forall k \right\},
\end{equation}
 where $P_k$ is the power budget of user $k$.
It is in contrast with a proper Gaussian signal, which has a covariance matrix with the specific structure in \eqref{struc-pgs-cv}. In this case, the feasibility set is 
\begin{equation}
\mathcal{P}_{PGS}\!=\! \left\{ \{\mathbf{P}_k\}_{k=1}^K\!\! :\!\text{Tr}(\mathbf{P}_k)\leq P_k,  \mathbf{P}_k= \mathbf{P}_{x_k}, \mathbf{P}_k \succcurlyeq\mathbf{0},\forall k \right\}\!,
\end{equation}
where $\mathbf{P}_{x_k}$ has the structure in \eqref{struc-pgs-cv}. 
In order to include both IGS and PGS schemes in the derivations to follow, we denote the feasibility set of the covariance matrices as $\mathcal{P}$ hereafter.

Consider the following optimization problem
\begin{subequations}\label{ar-opt}
\begin{align}
 \underset{\{\mathbf{P}_k\}_{k=1}^K\in\mathcal{P}
 }{\max}\,\,\,\,\,\,\,\,  & 
  f_0\left(\left\{\mathbf{P}_k\right\}_{k=1}^K\right) &\\
 \,\,\,\,\,\,\,\,\,\,\,\, \,\, \text{s.t.}   \,\,\,\,\,\,\,\,\,\,&  f_i\left(\left\{\mathbf{P}_k\right\}_{k=1}^K\right)\geq0,&i=1,2,...,I.
 \end{align}
\end{subequations}
If $f_0(\cdot)$ and $f_i(\cdot)$ for $i=1,2,...,I$ are concave, the optimization problem \eqref{ar-opt} is known to be convex\footnote{In \cite{aubry2018new}, it is defined as a concave optimization problem. However, we call it convex  since such an optimization problem is also widely known as a convex optimization problem \cite{boyd2004convex}.} and can be solved in polynomial time. 
If $f_0(\cdot)$ and $f_i(\cdot)$ for $i=1,2,...,I$ are neither concave nor pseudo-concave, it is not straightforward to derive the global optimal solution of \eqref{ar-opt} in  polynomial time \cite{yang2017unified,sun2017majorization,boyd2004convex,aubry2018new}. 
A way to solve non-convex optimization problems is to employ iterative optimization algorithms such as MM. 
The MM algorithm consists of two steps at each iteration: majorization and minimization. 
In the majorization step, the functions $f_0(\cdot)$ and $f_i(\cdot)$ for $i=1,2,...,I$ are approximated by surrogate functions. 
Then, the corresponding surrogate problem is solved in the minimization step.
In the following lemma, we present convergence conditions of MM iterative algorithms.
\begin{lemma}[\!\cite{aubry2018new}]\label{lem-sur}
Let us define $\tilde{f}_i^{(l)}(\cdot)$ for $l\in\mathbb{N}$ as surrogate functions of $f_i(\cdot)$ for $i=0,1,2,...,I$ such that the following conditions are fulfilled:
\begin{itemize}
\item $\tilde{f}_i^{(l)}\left(\{\mathbf{P}_k^{(l)}\}_{k=1}^{K}\right)=f_i\left(\{\mathbf{P}_k^{(l)}\}_{k=1}^{K}\right)$ for $i=0,1,2,\cdots,I$.
\item $\frac{\partial \tilde{f}_i^{(l)}\left(\{\mathbf{P}_k^{(l)}\}_{k=1}^{K}\right)}{\partial \mathbf{P}_k}
=\frac{\partial f_{i}\left(\{\mathbf{P}_k^{(l)}\}_{k=1}^{K}\right)}{\partial \mathbf{P}_k}$ for $i=0,1,2,\cdots,I$ and $k=1,2,\cdots,K$.
\item $\tilde{f}_i^{(l)}(\cdot)\leq f_i(\cdot)$ for $i=0,1,2,\cdots,I$ for all feasible $\left\{\mathbf{P}_k\right\}_{k=1}^K$,
\end{itemize}
where $\{\mathbf{P}_k^{(l)}\}_{k=1}^K$ is the initial point at the $l$-th iteration of the MM algorithm, which is obtained by solving 
\begin{subequations}\label{ar-opt-2}
\begin{align}
 \underset{\{\mathbf{P}_k\}_{k=1}^K\in\mathcal{P}
 }{\max}\,\,\,\,\,\,\,\,  & 
  \tilde{f}_0^{(l-1)}\left(\left\{\mathbf{P}_k\right\}_{k=1}^K\right) \\
 \,\,\,\,\,\,\,\, \,\,\,\,\,\,\, \text{{\em s.t.}}  \,\,\,\,\,\,\,\,\,&  \tilde{f}_i^{(l-1)}\left(\left\{\mathbf{P}_k\right\}_{k=1}^K\right)\geq0,&\forall i.
 \end{align}
\end{subequations}
Then, the sequence of $\{\mathbf{P}_k^{(l)}\}_{k=1}^K$ converges to a stationary point of \eqref{ar-opt}. 
\end{lemma}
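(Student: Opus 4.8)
The plan is to establish the claim in three stages: first, show that the objective values $f_0(\{\mathbf{P}_k^{(l)}\}_{k=1}^K)$ form a non-decreasing sequence that converges; second, use compactness of the feasible set $\mathcal{P}$ (bounded by the trace constraints and closed by the PSD, and for PGS the linear structure, constraints) to extract a convergent subsequence of iterates; and third, show that every such limit point satisfies the KKT conditions of \eqref{ar-opt}, which is exactly the stationarity notion adopted here. This is the standard line of argument for MM/CCP convergence, adapted to the matrix-variable problem \eqref{ar-opt}.

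\textbf{Monotonic ascent.} First I would check that each surrogate problem \eqref{ar-opt-2} is well posed and that its maximizer is feasible for \eqref{ar-opt}. The minorization condition $\tilde{f}_i^{(l-1)}\le f_i$ gives $f_i(\{\mathbf{P}_k\})\ge\tilde{f}_i^{(l-1)}(\{\mathbf{P}_k\})\ge 0$ for any point feasible for \eqref{ar-opt-2}, so such a point is feasible for \eqref{ar-opt}; the tangency condition gives $\tilde{f}_i^{(l-1)}(\{\mathbf{P}_k^{(l-1)}\})=f_i(\{\mathbf{P}_k^{(l-1)}\})\ge 0$, so the previous iterate is feasible for the $l$-th surrogate problem, which is therefore solvable over the compact set $\mathcal{P}$. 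Since $\{\mathbf{P}_k^{(l)}\}$ maximizes $\tilde{f}_0^{(l-1)}$ over the $l$-th surrogate feasible set, I would chain
\begin{equation}
f_0(\{\mathbf{P}_k^{(l)}\}) \ge \tilde{f}_0^{(l-1)}(\{\mathbf{P}_k^{(l)}\}) \ge \tilde{f}_0^{(l-1)}(\{\mathbf{P}_k^{(l-1)}\}) = f_0(\{\mathbf{P}_k^{(l-1)}\}),
\end{equation}
using minorization for the first inequality, optimality for the second, and tangency for the equality. As $f_0$ is continuous on the compact set $\mathcal{P}$ it is bounded above there, so this monotone sequence converges.

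\textbf{Stationarity of limit points.} Next I would take a subsequence $\{\mathbf{P}_k^{(l_j)}\}\to\{\mathbf{P}_k^\star\}\in\mathcal{P}$. From the sandwich above, $\tilde{f}_0^{(l_j-1)}(\{\mathbf{P}_k^{(l_j)}\})-\tilde{f}_0^{(l_j-1)}(\{\mathbf{P}_k^{(l_j-1)}\})\to 0$; assuming the surrogates are jointly continuous in the base point and the optimization variables, this forces $\{\mathbf{P}_k^{(l_j-1)}\}\to\{\mathbf{P}_k^\star\}$ as well. I would then write the KKT system of the $l_j$-th surrogate problem at its maximizer $\{\mathbf{P}_k^{(l_j)}\}$: the stationarity identity involves only the surrogate gradients $\partial\tilde{f}_i^{(l_j-1)}/\partial\mathbf{P}_k$, the normal cone of $\mathcal{P}$, and multipliers satisfying the usual sign and complementary-slackness conditions. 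By the gradient-consistency condition, $\partial\tilde{f}_i^{(l_j-1)}/\partial\mathbf{P}_k$ coincides with $\partial f_i/\partial\mathbf{P}_k$ at $\{\mathbf{P}_k^{(l_j-1)}\}$; letting $j\to\infty$ and using continuity of $\nabla f_i$, closedness of the normal cone of $\mathcal{P}$, and boundedness of the multiplier sequence, the surrogate KKT system collapses exactly onto the KKT system of \eqref{ar-opt} at $\{\mathbf{P}_k^\star\}$. Hence $\{\mathbf{P}_k^\star\}$ is a stationary point of \eqref{ar-opt}.

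\textbf{Expected obstacle.} The delicate step will be the limit passage in the KKT system, for two reasons. First, the Lagrange-multiplier sequence need not be bounded a priori; I would handle this by invoking a constraint qualification at the limit point (the Mangasarian--Fromovitz condition, or a Slater point for \eqref{ar-opt}), after which a further subsequence of multipliers converges. Second, the surrogate functions change with $l$, so pointwise tangency and gradient matching at each iterate are not enough by themselves --- a joint-continuity assumption on the map from base point to surrogate is needed to interchange the limit with the gradient evaluation. For the problems treated here both points are mild: $\mathcal{P}$ is compact with nonempty interior, and the CCP surrogates obtained by linearizing the convex parts $-r_{k,2}$ via a first-order Taylor expansion are smooth and jointly continuous in the base point and the covariance matrices, so the hypotheses of the lemma are met and the convergence conclusion follows.
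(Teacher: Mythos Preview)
The paper does not supply its own proof of this lemma; it is stated as a cited result from \cite{aubry2018new}, with no accompanying \texttt{proof} environment, so there is no in-paper argument to compare against. Your proposal follows the standard MM/CCP convergence template---monotone ascent via the minorization/tangency sandwich, compactness of $\mathcal{P}$ to extract convergent subsequences, and passage to the limit in the surrogate KKT system using the gradient-matching condition---which is precisely the line of reasoning one finds in the MM literature that the paper is invoking.

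One small caveat: what your argument actually establishes is that every \emph{limit point} of $\{\mathbf{P}_k^{(l)}\}$ is a KKT point of \eqref{ar-opt}, not that the full sequence converges. Upgrading subsequential stationarity to convergence of the whole sequence requires an extra ingredient (e.g., isolated stationary points, or a monotonicity-plus-finite-level-set argument). The lemma as phrased in the paper is slightly loose on this point, as is common in the engineering literature; your proof is consistent with the rigorous version of the claim.
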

\begin{remark}
The surrogate optimization problem \eqref{ar-opt-2} is not necessarily  convex; however, we can obtain the  global optimal solution of \eqref{ar-opt-2} much more easily than \eqref{ar-opt}. \end{remark}
Note that finding surrogate functions depends on the structure of the objective and constraint functions. 
In general, there might be different approaches to obtain a surrogate function (see, e.g., \cite{sun2017majorization}). 
As indicated in Section \ref{sec-rate}, the rate of each user is a difference of two concave functions, which allows us to apply CCP to obtain a suitable surrogate function. That is, we approximate the convex part of the rate expressions in \eqref{rate-eq} by its first-order Taylor series expansion, which is a linear function.
By MM and CCP, we are able to obtain a stationary point of optimization problems in which either the objective or constraint functions are linear functions of the rates of the users, as will be discussed in Section \ref{sec-prob}. 
In the following lemmas, we present the surrogate functions for the rates.
\begin{lemma}
\label{lem-R-pre}
Using CCP, we can obtain an affine upper bound for $\log\det(\mathbf{Q})$ as
\begin{equation}\label{alaki}
\log_2\det(\mathbf{Q})\! \leq\! \log_2\det\left(\!\mathbf{Q}^{(l)}\right)\! +\frac{1}{\ln 2} \text{\em{Tr}}((\mathbf{Q}^{(l)})^{-1}(\mathbf{Q}-\mathbf{Q}^{(l)})), 
\end{equation}
where $\mathbf{Q}^{(l)}$ is any feasible fixed point.
\end{lemma}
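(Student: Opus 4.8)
The plan is to read \eqref{alaki} as the statement that the differentiable concave map $\mathbf{Q}\mapsto\log_2\det(\mathbf{Q})$, defined on the cone of symmetric positive definite matrices, lies below each of its tangent hyperplanes. Accordingly, the first step is to invoke the classical fact that $\log\det$ is concave on the positive definite cone \cite{boyd2004convex}, together with the first-order characterization of concavity: a differentiable concave function $f$ satisfies $f(\mathbf{Q})\le f(\mathbf{Q}^{(l)})+\langle\nabla f(\mathbf{Q}^{(l)}),\mathbf{Q}-\mathbf{Q}^{(l)}\rangle$ for all $\mathbf{Q}$ in the domain, with $\langle\mathbf{A},\mathbf{B}\rangle=\text{Tr}(\mathbf{A}^T\mathbf{B})$. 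The second step is to compute the gradient, $\nabla_{\mathbf{Q}}\log\det(\mathbf{Q})=\mathbf{Q}^{-1}$, which is symmetric so the transpose in the inner product is immaterial and $\langle(\mathbf{Q}^{(l)})^{-1},\mathbf{Q}-\mathbf{Q}^{(l)}\rangle=\text{Tr}((\mathbf{Q}^{(l)})^{-1}(\mathbf{Q}-\mathbf{Q}^{(l)}))$. The third step is the change of base, $\log_2 x=\ln x/\ln 2$, which produces the factor $1/\ln 2$ in front of the trace term; substituting yields \eqref{alaki} verbatim.

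A cleaner self-contained route, which I would actually prefer to write out, avoids quoting concavity of $\log\det$ and reduces everything to the scalar inequality $\ln t\le t-1$ for $t>0$. Since $\mathbf{Q}^{(l)}\succ\mathbf{0}$, set $\mathbf{M}\triangleq(\mathbf{Q}^{(l)})^{-1/2}\mathbf{Q}(\mathbf{Q}^{(l)})^{-1/2}$, which is symmetric with eigenvalues $\lambda_1,\dots,\lambda_n>0$. Then $\log_2\det(\mathbf{Q})-\log_2\det(\mathbf{Q}^{(l)})=\log_2\det(\mathbf{M})=\tfrac{1}{\ln 2}\sum_i\ln\lambda_i$, whereas $\text{Tr}((\mathbf{Q}^{(l)})^{-1}(\mathbf{Q}-\mathbf{Q}^{(l)}))=\text{Tr}(\mathbf{M}-\mathbf{I})=\sum_i(\lambda_i-1)$. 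Applying $\ln\lambda_i\le\lambda_i-1$ termwise, summing over $i$, and dividing by $\ln 2$ gives \eqref{alaki}, with equality iff every $\lambda_i=1$, i.e., iff $\mathbf{Q}=\mathbf{Q}^{(l)}$. This simultaneous-diagonalization argument also makes transparent the two properties needed later for CCP, namely that the right-hand side of \eqref{alaki} matches the value and the gradient of $\log_2\det(\cdot)$ at $\mathbf{Q}^{(l)}$ while globally upper-bounding it, so it is a legitimate majorizing surrogate in the sense of Lemma \ref{lem-sur}.

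There is no genuine obstacle here; the only points that deserve care are (i) making explicit that the inequality is intended for $\mathbf{Q}^{(l)}$ in the interior of the PSD cone, so that $(\mathbf{Q}^{(l)})^{-1}$ and $(\mathbf{Q}^{(l)})^{-1/2}$ exist — which is automatic in our setting because every argument of the $\log\det$ in \eqref{rate-eq} contains $\underline{\mathbf{C}}_{z,k}\succ\mathbf{0}$ and is therefore positive definite — and (ii) verifying that the bound is applied only to such matrices when it is later used to linearize $r_{k,2}$. Both are immediate from the eigenvalue expression above, so the lemma follows with a short, elementary argument.
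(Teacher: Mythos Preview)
Your first route---invoking concavity of $\log\det$ on the positive definite cone together with the first-order upper bound $f(\mathbf{Q})\le f(\mathbf{Q}^{(l)})+\langle\nabla f(\mathbf{Q}^{(l)}),\mathbf{Q}-\mathbf{Q}^{(l)}\rangle$ and the gradient formula $\nabla_{\mathbf{Q}}\log\det(\mathbf{Q})=\mathbf{Q}^{-1}$---is exactly the paper's argument: it states that a concave function is majorized by any affine function matching value and derivative at a point, notes that $\log\det$ is concave with derivative $\mathbf{Q}^{-1}$, and concludes.

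Your preferred second route via $\mathbf{M}=(\mathbf{Q}^{(l)})^{-1/2}\mathbf{Q}(\mathbf{Q}^{(l)})^{-1/2}$ and the scalar bound $\ln\lambda_i\le\lambda_i-1$ is a genuinely different, more elementary derivation. It avoids citing concavity of $\log\det$ as a black box, reduces everything to the one-variable inequality, and as a bonus yields the equality case $\mathbf{Q}=\mathbf{Q}^{(l)}$ explicitly. The paper's approach is shorter to state and leans on a standard reference; yours is self-contained and makes the majorization properties required by Lemma~\ref{lem-sur} (value match, gradient match, global upper bound) visible in one line. Either is fully adequate here.
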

\begin{proof}
A concave function can be majorized by an affine function if the two functions have the same value and the same derivative in a point \cite{sun2017majorization}. The logarithmic function is concave. Furthermore, the left-hand and the right-hand sides of \eqref{alaki} hold these conditions at $\mathbf{Q}=\mathbf{Q}^{(l)}$. Thus, the upper-bound in \eqref{alaki} holds for all feasible $\mathbf{Q}$.
Note that the derivative of $\log\det(\mathbf{Q})$ with respect to $\mathbf{Q}$ is $\mathbf{Q}^{-1}$. 
\end{proof}
 \begin{lemma} \label{lem-R}
A concave approximation of the rates in \eqref{rate-eq} can be obtained by CCP as 
\begin{align}\nonumber
R_k\geq \tilde{R}_k^{(l)} &=r_{k,1}-r_{k,2}\left(\{\mathbf{P}_i^{(l)}\}_{i=1}^{K}\right)\\
&-\text{{\em Tr}}\left(\sum_{i=1,i\neq k}^{K}\frac{\partial r_{k,2}\left(\{\mathbf{P}_i^{(l)}\}_{i=1}^{K}\right)^T}{\partial \mathbf{P}_i}(\mathbf{P}_i-\mathbf{P}_i^{(l)})\right)
\end{align}
where $r_{k,1}$ and $r_{k,2}$ are, respectively, the concave and convex parts of $R_k$ in \eqref{rate-eq}. 
Moreover, $\frac{\partial r_{k,2}\left(\{\mathbf{P}_i^{(l)}\}_{i=1}^{K}\right)}{\partial \mathbf{P}_i}$ is the derivative of $r_{k,2}$ with respect to $\mathbf{P}_i$ at the previous iteration as
\begin{multline}
\frac{\partial r_{k,2}\left(\{\mathbf{P}_i^{(l)}\}_{i=1}^{K}\right)}{\partial \mathbf{P}_i}
=\\\frac{1}{\ln 2}
\tilde{\mathbf{H}}_{ki}^T
\left(\underline{\mathbf{C}}_{z,k}+\sum_{i=1,i\neq k}^{K}\tilde{\mathbf{H}}_{ki}\mathbf{P}_i^{(l)} \tilde{\mathbf{H}}_{ki}^T\right)^{-1}\tilde{\mathbf{H}}_{ki}.
\end{multline}
Note that $r_{k,2}\left(\{\mathbf{P}_i^{(l)}\}_{i=1}^{K}\right)$ is constant and is given by $r_{k,2}$ at the previous step. Additionally, $R_k$ and $\tilde{R}_k^{(l)}$ fulfill the conditions in Lemma \ref{lem-sur}.
\end{lemma}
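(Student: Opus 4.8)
The plan is to verify directly that the affine surrogate $\tilde R_k^{(l)}$ defined in the statement satisfies the three conditions of Lemma~\ref{lem-sur}: matching value, matching gradient, and global lower bound (note that here we need a \emph{lower} bound on the rate, so the surrogate plays the role of a minorant, which is what CCP produces when we linearize the concave term $-r_{k,2}$ that enters with a negative sign). The starting observation is the decomposition $R_k = r_{k,1} - r_{k,2}$ from \eqref{rate-eq}, where both $r_{k,1}$ and $r_{k,2}$ are concave in $\{\mathbf{P}_i\}$; the surrogate keeps $r_{k,1}$ intact and replaces $r_{k,2}$ by its first-order Taylor expansion about the current iterate $\{\mathbf{P}_i^{(l)}\}$, namely $r_{k,2}(\{\mathbf{P}_i^{(l)}\}) + \sum_{i\neq k}\text{Tr}\!\left(\frac{\partial r_{k,2}}{\partial \mathbf{P}_i}(\mathbf{P}_i-\mathbf{P}_i^{(l)})\right)$, where the sum omits $i=k$ because $\mathbf{P}_k$ does not appear in $r_{k,2}$.

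First I would establish the gradient formula for $r_{k,2}$. Writing $r_{k,2} = \tfrac12\log_2\det(\mathbf{M}_k)$ with $\mathbf{M}_k = \underline{\mathbf{C}}_{z,k} + \sum_{i\neq k}\tilde{\mathbf{H}}_{ki}\mathbf{P}_i\tilde{\mathbf{H}}_{ki}^T$, the chain rule together with $\frac{\partial}{\partial \mathbf{X}}\log\det(\mathbf{X}) = \mathbf{X}^{-1}$ and $\frac{\partial \mathbf{M}_k}{\partial \mathbf{P}_i} = \tilde{\mathbf{H}}_{ki}^T(\cdot)\tilde{\mathbf{H}}_{ki}$ gives exactly $\frac{1}{\ln 2}\tilde{\mathbf{H}}_{ki}^T\mathbf{M}_k^{-1}\tilde{\mathbf{H}}_{ki}$ when evaluated at $\mathbf{P}_i^{(l)}$, which is the displayed expression; here one uses the symmetry of $\mathbf{M}_k$ so that the transpose in the statement is immaterial. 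Second, the value-matching condition is immediate: at $\mathbf{P}_i = \mathbf{P}_i^{(l)}$ the linear correction term vanishes and $\tilde R_k^{(l)} = r_{k,1}(\{\mathbf{P}_i^{(l)}\}) - r_{k,2}(\{\mathbf{P}_i^{(l)}\}) = R_k(\{\mathbf{P}_i^{(l)}\})$. Third, the gradient-matching condition follows because $r_{k,1}$ is untouched, so $\frac{\partial \tilde R_k^{(l)}}{\partial \mathbf{P}_k} = \frac{\partial r_{k,1}}{\partial \mathbf{P}_k}$ (as $r_{k,2}$ does not depend on $\mathbf{P}_k$), and for $i\neq k$ the derivative of the affine term in $\mathbf{P}_i$ is precisely $\frac{\partial r_{k,2}}{\partial \mathbf{P}_i}$ evaluated at the iterate, so $\frac{\partial \tilde R_k^{(l)}}{\partial \mathbf{P}_i} = \frac{\partial r_{k,1}}{\partial \mathbf{P}_i} - \frac{\partial r_{k,2}}{\partial \mathbf{P}_i} = \frac{\partial R_k}{\partial \mathbf{P}_i}$ at the current point.

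Finally, for the global lower-bound condition $\tilde R_k^{(l)} \leq R_k$, I would invoke concavity of $r_{k,2}$: a concave function lies below its tangent hyperplane, so $r_{k,2}(\{\mathbf{P}_i\}) \leq r_{k,2}(\{\mathbf{P}_i^{(l)}\}) + \sum_{i\neq k}\text{Tr}\!\left(\frac{\partial r_{k,2}}{\partial \mathbf{P}_i}(\mathbf{P}_i-\mathbf{P}_i^{(l)})\right)$ for all feasible $\{\mathbf{P}_i\}$. Negating and adding $r_{k,1}$ yields $R_k = r_{k,1} - r_{k,2} \geq r_{k,1} - \big[\text{linearization}\big] = \tilde R_k^{(l)}$, as required. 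The one point deserving care---and the only real obstacle---is justifying concavity of $r_{k,2}$ (equivalently $r_{k,1}$) as a function of the positive semidefinite matrices $\mathbf{P}_i$: this rests on the concavity of $\mathbf{X}\mapsto\log\det(\mathbf{X})$ on the positive definite cone composed with the affine map $\{\mathbf{P}_i\}\mapsto \underline{\mathbf{C}}_{z,k}+\sum_{i\neq k}\tilde{\mathbf{H}}_{ki}\mathbf{P}_i\tilde{\mathbf{H}}_{ki}^T$, which preserves positive semidefiniteness, plus the fact that $\underline{\mathbf{C}}_{z,k}\succ\mathbf{0}$ keeps the argument in the interior of the domain; this is exactly the structural observation already recorded after \eqref{rate-eq}. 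With concavity in hand, Lemma~\ref{lem-R-pre} is simply the special case of the tangent-plane inequality applied to $\log_2\det$, and the three bullet conditions of Lemma~\ref{lem-sur} hold, completing the proof.
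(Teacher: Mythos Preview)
Your proposal is correct and follows exactly the route the paper intends: the paper does not give a separate proof for this lemma but simply presents it as the direct application of Lemma~\ref{lem-R-pre} (the tangent-hyperplane upper bound for $\log_2\det$) to the concave term $r_{k,2}$, together with the observation after \eqref{rate-eq} that $R_k$ is a difference of two concave functions. Your verification of the three conditions in Lemma~\ref{lem-sur} (value matching, gradient matching, global minorization via concavity of $r_{k,2}$) and your derivation of the derivative formula are precisely the details one would fill in, and they match the paper's implicit argument.
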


\section{Optimization problems}\label{sec-prob}
In this section, we obtain a stationary point of the rate region, sum-rate maximization, EE region and global EE maximization problems by the framework described in Section \ref{sec-framewrk}. 

\subsection{Achievable rate region}

The rate region for the  $K$-user MIMO  IC with HWI can be derived by the rate-profile technique  as \cite{zeng2013transmit}
\begin{align}\label{rateregion}
 \underset{R,\{\mathbf{P}_k\}_{k=1}^K\in\mathcal{P}
 }{\max}\,\,\,\,\,\,\,\,  & 
  R &
 \,\,\,\,\,\,\,\, \,\,\,\,\,\,\, \text{s.t.}  \,\,\,\,\,\,\,\,\,&  R_{k}\geq\alpha_k R,&\forall k,
 \end{align}
where $\alpha_k\geq 0$ for $k=1,2,\cdots,K$ are given constants, and $\sum_{k=1}^K\alpha_k=1$. The boundary of the achievable rate region can be derived by solving \eqref{rateregion} for different values of the $\alpha_k$s.  
The optimization problem \eqref{rateregion} is not convex; however, we can obtain its stationary
point by the framework proposed in Section \ref{sec-framewrk}. 
That is, we solve \eqref{rateregion} iteratively, and in each iteration, we employ the surrogate function in Lemma \ref{lem-R} for the rates. 
Since the corresponding surrogate optimization problem is convex, we can efficiently derive the global optimal solution of each surrogate optimization problem and consequently, obtain a stationary point of \eqref{rateregion}.

\subsection{Maximizing sum-rate}
The sum-rate of the  $K$-user MIMO IC with HWI can be obtained by solving 
\begin{subequations}
\begin{align}
 \underset{\{\mathbf{P}_k\}_{k=1}^K\in\mathcal{P}
 }{\max}\,\,\,\,\,\,\,\,  & 
  \sum_{k=1}^KR_k &\\
 \label{qos-r} \text{s.t.}  \,\,\,\,\,\,\,\, \,\,\,\,\,\,\,\,\,\,\,\,\,\,\,\,&  R_{k}\geq R_{\text{th},k},&\forall k,
 \end{align}
\label{rate-sum}
\end{subequations}
\!\!where \eqref{qos-r} is the quality of service (QoS) constraint, and $R_{\text{th},k}$ is a given threshold for the rate of user $k$. 
Note that the $R_{\text{th},k}$s have to be set to make \eqref{rate-sum} feasible. 
Similar to \eqref{rateregion}, we can solve \eqref{rate-sum} by the framework in Section \ref{sec-framewrk} and obtain its stationary point. 
Note that each surrogate optimization problem is convex, which can be solved efficiently.

\subsection{Energy-efficiency region}
Now we consider the EE of the  $K$-user MIMO IC with HWI. 
The EE of user $k$ is defined as the ratio of its achievable rate to its power consumption \cite{zappone2015energy}
\begin{equation}
E_k=\frac{R_k}{\eta_k \text{Tr}(\mathbf{P}_k)+P_{c,k}}, \hspace{1cm} \text{(bits/Joule)}
\end{equation} 
where $\eta_k^{-1}$, and $P_{c,k}$ are, respectively, the power transmission efficiency of user $k$, and the constant power consumption of the $k$-th transceiver. 
The EE function is a linear function of the rates, which allows us to apply the framework in Section \ref{sec-framewrk} to optimize the EE. 
 EE function has a fractional structure, which makes its optimization more difficult than the rate analysis, as will be discussed in the following. 

The EE region of the  $K$-user MIMO IC with HWI can be derived by solving \cite{zappone2015energy}
\begin{subequations}
\begin{align}
 \underset{E,\{\mathbf{P}_k\}_{k=1}^K\in\mathcal{P}
 }{\max}\,\,\,\,\,\,\,\,  & 
  E &\\
 \label{5bbb} \text{s.t.}  \,\,\,\,\,\,\,\, \,\,\,\,\,\,\,\,\,\,\,\,\,\,\,\,&  E_{k}=\frac{R_k}{\eta_k \text{Tr}(\mathbf{P}_k)+P_{c,k}}\geq\alpha_k E,&\forall k,\\
\label{qos-eee}&  R_{k}\geq R_{\text{th},k},\hspace{1cm}\forall k,
 \end{align}
\label{ee-region}
\end{subequations}
\!where 
the constraint \eqref{qos-eee} is the QoS constraint, similar to \eqref{qos-r},
and $R_{\text{th},k}$ has to be chosen such that the feasible set of parameters is not empty. 
Similar to \eqref{rateregion}, the boundary of the EE region can be derived by solving \eqref{ee-region} for all possible $\alpha_k$s. 
Since $E_k$ is a linear function of $R_k$, we can apply the framework in Section \ref{sec-framewrk} to derive a stationary point of  \eqref{ee-region}. The surrogate optimization problem at the $l$-th iteration is
\begin{subequations}
\begin{align}
 \underset{E,\{\mathbf{P}_k\}_{k=1}^K\in\mathcal{P}
 }{\max}\,\,\,\,\,\,\,\,  & 
  E &\\
 \label{5b} \text{s.t.}  \,\,\,\,\,\,\,\,\,\,\,\,\,\,&  \tilde{E}_{k}^{(l)}=\frac{\tilde{R}_k^{(l)}}{\eta_k \text{Tr}(\mathbf{P}_k)+P_{c,k}}\geq\alpha_k E,&\forall k,\\
\label{qos-e}&  \tilde{R}^{(l)}_{k}\geq R_{\text{th},k},\hspace{1cm}\forall k.
 \end{align}
\label{ee-region-2}
\end{subequations}
Note that we can rewrite \eqref{ee-region-2} as a maximin fractional optimization problem by removing $E$ as
\begin{equation}\label{ee-region-3}
\underset{\{\mathbf{P}_k\}_{k=1}^K\in\mathcal{P}
 }{\max}\,\,\,\, \underset{1\leq k \leq K}\min \left\{\frac{\tilde{E}_{k}^{(l)}}{\alpha_k} \right\} 
 \,\,\,\,\,\, \text{s.t.} \,\,\,\,\, \tilde{R}^{(l)}_{k}\geq R_{\text{th},k},\,\,\,\,\forall k.
\end{equation}
The optimization problem \eqref{ee-region-2} (or equivalently \eqref{ee-region-3}) is not convex; however, its global optimum can be derived by employing the generalized Dinkelbach algorithm (GDA). 
The GDA is a powerful tool to solve maximin fractional optimization problems and is presented in the following Lemma.
\begin{lemma} \label{lem-gda}
Consider the following fractional optimization problem
\begin{equation}\label{eq-lem-gda}
\underset{\{\mathbf{X}\}\in\mathcal{X}
 }{\max}\,\,\,\,\,\,\underset{1\leq k\leq K}{\min}\left\{\frac{v_k\left(\mathbf{X}\right)}{u_k\left(\mathbf{X}\right)}\right\},
\end{equation}
where $v_k\left(\cdot\right)$ is a concave function in $\mathbf{X}$, $u_k\left(\cdot\right)$ is a convex function in $\mathbf{X}$,  
and $\mathcal{X}$ is a compact set. 
The global optimal solution of \eqref{eq-lem-gda} can be derived, iteratively, by the GDA, i.e., by solving
\begin{align}\label{eq-lem-gda-2}
\underset{t,\{\mathbf{X}\}\in\mathcal{X}
 }{\max}\,\,\,\,& t&
  \,\,\,\,\text{{\em s.t.}}  \,\,\,\,\,\,\,& 
 v_k\left(\mathbf{X}\right)-\mu^{(m)}u_k\left(\mathbf{X}\right)\geq t,&\forall k,
\end{align}
where $\mu^{(m)}$ is 
\begin{equation}
\mu^{(m)}=\underset{1\leq k\leq K}{\min}\left\{\frac{v_k\left(\mathbf{X}^{(m-1)}\right)}{u_k\left(\mathbf{X}^{(m-1)}\right)}\right\},
\end{equation}
where $\mathbf{X}^{(m-1)}$ is the solution of \eqref{eq-lem-gda-2} at the $(m-1)$th iteration. 
The GDA converges to the global optimum of \eqref{eq-lem-gda} linearly.
\end{lemma}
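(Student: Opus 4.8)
The plan is the classical generalized-Dinkelbach argument: recast \eqref{eq-lem-gda} through a parametric auxiliary function, show the iterates $\mu^{(m)}$ form a monotone sequence that squeezes toward the optimum, and then quantify the contraction. Throughout I use, as holds in all our applications, that each $u_k(\cdot)$ is continuous and strictly positive on the compact set $\mathcal{X}$ (for the EE problems $u_k=\eta_k\text{Tr}(\mathbf{P}_k)+P_{c,k}\geq P_{c,k}>0$), so $u_k$ attains a positive minimum $u_{\min}>0$ and a finite maximum $U_{\max}$ on $\mathcal{X}$, and that the ratios $v_k/u_k$ are nonnegative so every $\mu^{(m)}\geq 0$. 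Fix a maximizer $\mathbf{X}^\star\in\mathcal{X}$ of \eqref{eq-lem-gda} (it exists by compactness) and let $\mu^\star$ be its value.

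\emph{Step 1 (auxiliary function).} Define $F(\mu)=\max_{\mathbf{X}\in\mathcal{X}}\min_{1\leq k\leq K}\{v_k(\mathbf{X})-\mu u_k(\mathbf{X})\}$. The maximum is attained since $\mathcal{X}$ is compact and the inner function is continuous; $F$ is continuous, and since each $u_k>0$ it is strictly decreasing. Because $\mu^{(m)}\geq 0$ and $u_k$ is convex, $v_k-\mu^{(m)}u_k$ is concave, so \eqref{eq-lem-gda-2} is a convex program whose optimal value is exactly $F(\mu^{(m)})$ and whose optimizer is $\mathbf{X}^{(m)}$; hence each subproblem is solvable to global optimality. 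The key equivalence is that, for any $\mu$, there is an $\mathbf{X}\in\mathcal{X}$ with $\min_k v_k(\mathbf{X})/u_k(\mathbf{X})\geq\mu$ if and only if $F(\mu)\geq 0$ (multiply the inequalities by the positive $u_k$). Together with strict monotonicity this yields $F(\mu)>0\Leftrightarrow\mu<\mu^\star$, $F(\mu)=0\Leftrightarrow\mu=\mu^\star$, and $F(\mu)<0\Leftrightarrow\mu>\mu^\star$.

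\emph{Step 2 (monotone convergence).} By induction $\mu^{(m)}\leq\mu^\star$: it holds for $m=1$ by definition, and if $\mu^{(m)}\leq\mu^\star$ then $F(\mu^{(m)})\geq 0$, so $v_k(\mathbf{X}^{(m)})-\mu^{(m)}u_k(\mathbf{X}^{(m)})\geq F(\mu^{(m)})\geq 0$ for every $k$, which forces $\mu^{(m+1)}=\min_k v_k(\mathbf{X}^{(m)})/u_k(\mathbf{X}^{(m)})\geq\mu^{(m)}$, while $\mu^{(m+1)}\leq\mu^\star$ by definition of $\mu^\star$. Thus $\{\mu^{(m)}\}$ is nondecreasing and bounded above, so $\mu^{(m)}\uparrow\bar\mu\leq\mu^\star$. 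If $k'$ attains the minimum defining $\mu^{(m+1)}$ then $v_{k'}(\mathbf{X}^{(m)})=\mu^{(m+1)}u_{k'}(\mathbf{X}^{(m)})$, hence $0\leq F(\mu^{(m)})\leq v_{k'}(\mathbf{X}^{(m)})-\mu^{(m)}u_{k'}(\mathbf{X}^{(m)})=(\mu^{(m+1)}-\mu^{(m)})u_{k'}(\mathbf{X}^{(m)})\leq U_{\max}(\mu^{(m+1)}-\mu^{(m)})\to 0$; by continuity of $F$ we get $F(\bar\mu)=0$, i.e. $\bar\mu=\mu^\star$. A compactness/continuity argument then shows every limit point of $\{\mathbf{X}^{(m)}\}$ solves \eqref{eq-lem-gda}.

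\emph{Step 3 (linear rate).} For fixed $\mathbf{X}$ the map $g_{\mathbf{X}}(\mu)=\min_k\{v_k(\mathbf{X})-\mu u_k(\mathbf{X})\}$ is piecewise affine in $\mu$ with all slopes in $[-U_{\max},-u_{\min}]$ and satisfies $g_{\mathbf{X}}\leq F$ pointwise. Applying this to $\mathbf{X}^\star$, where $g_{\mathbf{X}^\star}(\mu^\star)=F(\mu^\star)=0$, and using $\mu^{(m)}\leq\mu^\star$ gives $F(\mu^{(m)})\geq g_{\mathbf{X}^\star}(\mu^{(m)})\geq u_{\min}(\mu^\star-\mu^{(m)})$. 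From Step 2, $\mu^{(m+1)}-\mu^{(m)}=[v_{k'}(\mathbf{X}^{(m)})-\mu^{(m)}u_{k'}(\mathbf{X}^{(m)})]/u_{k'}(\mathbf{X}^{(m)})\geq F(\mu^{(m)})/U_{\max}$. Combining the two, $\mu^\star-\mu^{(m+1)}\leq(1-u_{\min}/U_{\max})(\mu^\star-\mu^{(m)})$ with $1-u_{\min}/U_{\max}\in[0,1)$, which is the claimed linear convergence. I expect this last step to be the main obstacle: one must bound the slope of $g_{\mathbf{X}}$ from both sides and carefully justify the sandwich $g_{\mathbf{X}^\star}\leq F$ together with $F(\mu^{(m)})\leq U_{\max}(\mu^{(m+1)}-\mu^{(m)})$, all of which rely essentially on the strict positivity and boundedness of the $u_k$ on the compact $\mathcal{X}$; Steps 1 and 2 are routine once $F$ is in place.
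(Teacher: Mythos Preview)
Your argument is correct and is precisely the classical generalized-Dinkelbach proof of Crouzeix et al.; the paper itself does not spell out a proof but simply refers the reader to \cite{zappone2015energy,soleymani2019improper,crouzeix1991algorithms}, where exactly this parametric function $F(\mu)$, its strict monotonicity, the monotone-sequence argument, and the $(1-u_{\min}/U_{\max})$ contraction are established. So your proposal supplies what the paper outsources, along essentially the same lines.
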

\begin{proof} Please refer to \cite{zappone2015energy,soleymani2019improper,crouzeix1991algorithms}.
\end{proof}
\begin{remark} 
In order to apply the GDA, it is not necessary that $v_k$ and $-u_k$ are concave in $\mathbf{X}$. However, the GDA is ensured to obtain the global optimum if  $v_k$ and $-u_k$ are concave in $\mathbf{X}$.
\end{remark}
Applying the GDA to \eqref{ee-region-2}, we have 
\begin{subequations}
\begin{align}
 \underset{E,\{\mathbf{P}_k\}_{k=1}^K\in\mathcal{P}
 }{\max}\,  & 
  E &\\
 \label{5b} \text{s.t.}\,\,\,\,\,\,\,\,\,\,&  \tilde{R}_k^{(l)}-\mu^{(m)}\left(\eta_k \text{Tr}(\mathbf{P}_k)+P_{c,k}\right)\geq\alpha_k E,&\forall k,\\
\label{qos-e}&  \tilde{R}^{(l)}_{k}\geq R_{\text{th},k},\hspace{1cm}\forall k,
 \end{align}
\label{ee-region-gda}
\end{subequations}
where
\begin{equation}
\mu^{(m)}=\underset{1\leq k\leq K}{\min}\left\{\frac{E_k\left(\{\mathbf{P}_i^{(l,m-1)}\}_{i=1}^K\right)}{\alpha_k}\right\},
\end{equation}
where $\{\mathbf{P}_i^{(l,m-1)}\}_{i=1}^K$ is the solution of \eqref{ee-region-gda} at the $(m-1)$th iteration.
Note that the GDA converges to the global optimum of \eqref{ee-region-2} linearly, and the whole algorithm converges to a stationary point of \eqref{ee-region}.

\subsection{Global energy-efficiency}
In this subsection, we consider the global EE of the $K$-user MIMO IC with HWI, which can be cast as \cite{zappone2015energy}
\begin{subequations}\label{ee-global}
\begin{align}
 \underset{\{\mathbf{P}_k\}_{k=1}^K\in\mathcal{P}
 }{\max}\,\,\,\,\,\,\,\,  & 
  G=\frac{\sum_{k=1}^KR_k}{\sum_{k=1}^K\left(\eta_k \text{Tr}(\mathbf{P}_k)+P_{c,k}\right)} \\
 \,\,\,\,\,\,\,\,\,\,\,\, \text{s.t.}   \,\,\,\,\,\,\,\,\,\,\,\,&  R_{k}\geq R_{\text{th},k},\hspace{1cm}\forall k.
 \end{align}
\end{subequations}
Similar to \eqref{ee-region}, since the EE is a linear function of the rates, we can apply the framework in Section \ref{sec-framewrk} to obtain a stationary point of \eqref{ee-global}. Thus, the surrogate optimization problem at the $l$-th iteration is 
\begin{subequations}
\begin{align}\label{ee-global-2}
 \underset{\{\mathbf{P}_k\}_{k=1}^K\in\mathcal{P}
 }{\max}\,\,\,\,  & 
  \tilde{G}=\frac{\sum_{k=1}^K\tilde{R}_k}{\sum_{k=1}^K\left(\eta_k \text{Tr}(\mathbf{P}_k)+P_{c,k}\right)} \\
  \text{s.t.}  \,\,\,\,\,\,\,\,\,\,\,\,&  \tilde{R}_{k}\geq R_{\text{th},k},\hspace{1cm}\forall k.
  \end{align}
\end{subequations}
Similar to \eqref{ee-region-2}, the optimization problem  is not convex; however, its global optimal solution can be derived by the Dinkelbach algorithm.
That is, we obtain $\mathbf{P}_i^{(l,m)}$ by solving
\begin{subequations}
\begin{align}
 \underset{\{\mathbf{P}_k\}_{k=1}^K\in\mathcal{P}
 }{\max}\,\,\,\,\,\,\,\,  & 
  \sum_{k=1}^K\tilde{R}_k-\mu^{(m)}\sum_{k=1}^K\left(\eta_k \text{Tr}(\mathbf{P}_k)+P_{c,k}\right) &\\
 \label{5b} \text{s.t.}  \,\,\,\,\,\,\,\, \,\,\,\,\,\,\,\,\,\,\,\,\,\,\,\,&  \tilde{R}_{k}\geq R_{\text{th},k},\hspace{1cm}\forall k,
  \end{align}
\label{ee-global-gda}
\end{subequations}
\!\!where $\mu^{(m)}=\tilde{G}\left(\{\mathbf{P}_i^{(l,m-1)}\}_{i=1}^K\right),$ in which $\{\mathbf{P}_i^{(l,m-1)}\}_{i=1}^K$ is the solution of \eqref{ee-global-gda} at the $(m-1)$th iteration. 
The global optimal solution of \eqref{ee-global-2} can be achieved by iteratively solving \eqref{ee-global-gda} and updating $\mu^{(m)}$ until a convergence metric is met.
Moreover, as indicated, the whole algorithm converges to a stationary point of \eqref{ee-global}.

\section{Numerical examples}\label{sec-v}

In this section, we provide some numerical examples. 
We employ Monte Carlo simulations and average the results over 100 independent  channel realizations. 
In each channel realization, the channel entries are drawn from a zero-mean complex proper Gaussian distribution with unit variance, i.e., $\mathcal{CN}(0,1)$.
For all simulations, the maximum number of the iterations of the MM algorithm is set to 40. 
We also consider $\mathbf{C}_T=\sigma^2_T\mathbf{I}_{N_T}$ and $\mathbf{C}_R=\sigma^2_R\mathbf{I}_{N_R}$ \cite{javed2019multiple}, or equivalently $\underline{\mathbf{C}}_T=\frac{1}{2}\sigma^2_T\mathbf{I}_{2N_T}$ and $\underline{\mathbf{C}}_R=\frac{1}{2}\sigma^2_R\mathbf{I}_{2N_R}$. 
In all simulations, we assume $\sigma^2_T=0.2$ and $\sigma^2_R=1$.
We assume that the I/Q imbalance by each antenna is the same. In other words, the matrices $\mathbf{A}_T=a_T\mathbf{I}_{N_T}$, $\bm{\theta}_T=\phi_T\mathbf{I}_{N_T}$,  $\mathbf{A}_R=a_R\mathbf{I}_{N_R}$, and $\bm{\theta}_R=\phi_R\mathbf{I}_{N_R}$ are scaled identity matrices. We consider $\phi_T=\phi_R=5$ degrees in all simulations.
We also define the signal-to-noise ratio (SNR) as the ratio of the power budget to $\sigma^2$, i.e., SNR$=\frac{P}{\sigma^2}$.

To the best of our knowledge, there are no other IGS algorithms in the literature that optimize EE or rate functions in the K-user MIMO IC.
Therefore, we compare our proposed algorithms for PGS and IGS with the PGS algorithm for ideal devices. 
The considered schemes in this section are as follows:
\begin{itemize}
\item {\bf IGS}: The IGS scheme.
\item {\bf PGS}: The PGS scheme.
\item {\bf I-PGS}: The PGS scheme for $K$-user IC without considering the I/Q imbalance in the design.
\end{itemize}

Note that the performance of MM algorithms depend on the initial point. In PGS and I-PGS algorithms, we start with a uniform power allocation $\mathbf{P}_k=\frac{P}{2N_T}\mathbf{I}_{2N_T}$ for $k=1,\cdots,K$ for optimization problems \eqref{rateregion} and \eqref{rate-sum}; and $\mathbf{P}_k=\frac{0.3P}{2N_T}\mathbf{I}_{2N_T}$ for $k=1,\cdots,K$ for optimization problems \eqref{ee-region} and \eqref{ee-global}. On the other hand, the IGS algorithm takes the solution of the PGS algorithm as an initial point.
\begin{figure}[t!]
    \centering
    \begin{subfigure}[t]{0.5\textwidth}
        \centering
\includegraphics{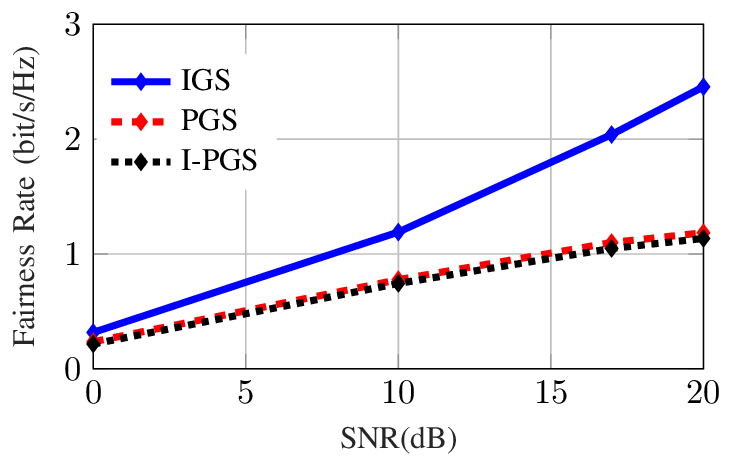}
        \caption{SISO.}
    \end{subfigure}%
    \\
    \begin{subfigure}[t]{0.5\textwidth}
        \centering
\includegraphics{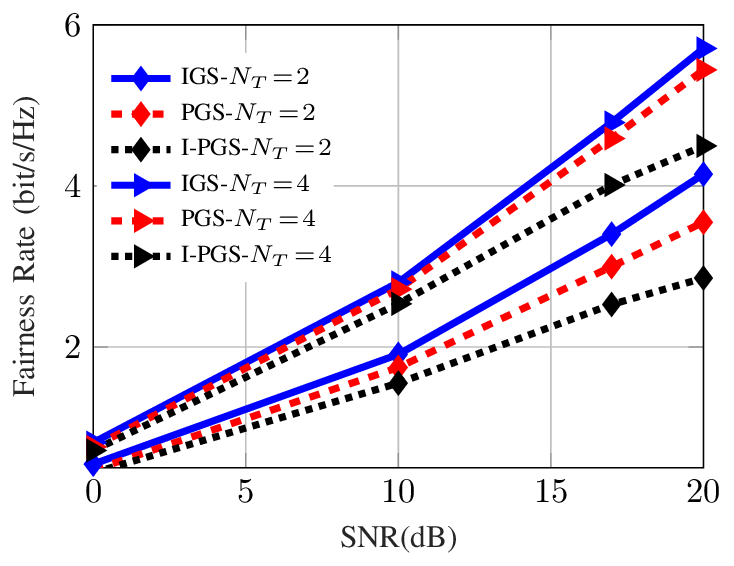}
        \caption{MISO.}
    \end{subfigure}
    \caption{Average fairness rate versus SNR for the 2-user SISO and MISO IC.}
	\label{Fig-r-simo}
\end{figure}
\subsection{Achievable rate region}
In this subsection, we consider a specific point of the rate region, which maximizes the minimum rate of users. 
The minimum rate of the $K$-user MIMO  IC is maximized for $\alpha_k=\frac{1}{K}$. 
This point is also referred as the maximin point or the fairness point. Hereafter, we call the maximin rate the fairness rate.  
We show the fairness rate of the 2-user SISO and MISO  IC for $a_T=0.6$ and different number of antennas at the transmitter side in Fig. \ref{Fig-r-simo}. 
As can be observed, there is a huge performance improvement by IGS in the 2-user SISO IC, especially at high SNR. 
However, the benefits of employing IGS become less substantial when  the number of antennas  increases. 
This is due to the fact that, by increasing the number of spatial resources (e.g. antennas) for a fixed number of users, the interference of PGS schemes can be managed more easily, and hence, IGS as an interference-management tool does not provide an additional significant gain. As indicated, this is in line with the results in \cite{soleymani2018improper}, in which it was shown that IGS might not provide significant benefits in OFDM UCR systems due to the existence of multiple parallel channels over which interference can be managed efficiently without resorting to IGS. 
Moreover, in \cite{hellings2018improper}, it was shown that the IGS does not provide any benefit in comparison to PGS with time sharing when the average power consumption is constrained instead of the  instantaneous power, which allows a more flexible power allocation. 
To sum up, the benefits of IGS decrease or even vanish when increasing the number of resources either by increasing the number of antennas or number of time slots, by time sharing, and/or the number of parallel channels by OFDM. 
Nevertheless, it is worth mentioning that, even with a large number of antennas, IGS and HWI-aware PGS outperform PGS, which is designed for ideal devices.  
Furthermore, IGS always performs not worse than PGS since IGS includes PGS  as a particular case. With I/Q imbalance, IGS performs better than PGS even in the MISO case with $N_T=4$. However, the benefits of IGS as an interference-management technique are not significant in this case.

\begin{figure}[t!]
    \centering
    \begin{subfigure}[t]{0.5\textwidth}
        \centering
\includegraphics{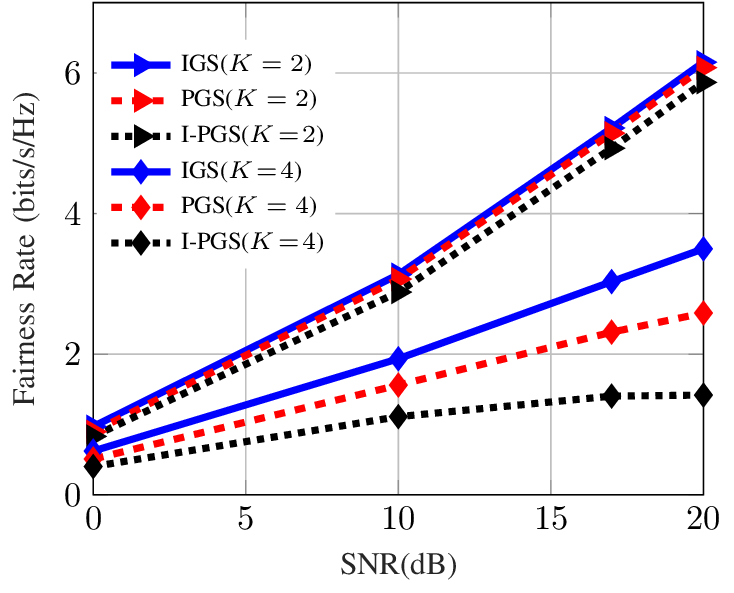}
        \caption{$K=2$ and $K=4$.}
    \end{subfigure}%
    \\ 
    \begin{subfigure}[t]{0.5\textwidth}
        \centering
           \includegraphics{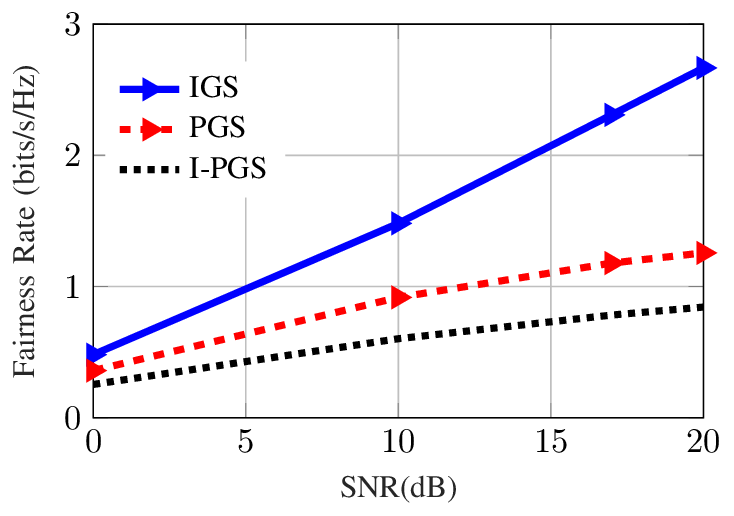}
        \caption{$K=6$.}
    \end{subfigure}
    \caption{Average fairness rate versus SNR for the $K$-user $2\times 2$ MIMO IC.}
	\label{Fig-r-snr}
\end{figure}
Figure \ref{Fig-r-snr} shows the  fairness rate of the  $K$-user $2\times 2$ MIMO IC versus the SNR for $a_T=0.6$. 
As can be observed, the benefit of IGS is minor when $K=2$. 
However, by increasing the number of users, the performance improvement of IGS increases. The reason is that, by increasing the number of users, the interference level increases as well, which results in 
 more performance improvements by transmitting improper signals. 
Moreover, IGS performs much better at high SNR for $K=6$, similar to the   2-user SISO IC as depicted in Fig.  \ref{Fig-r-simo}a.

\begin{figure}[t!]
    \centering
    \begin{subfigure}[t]{0.5\textwidth}
        \centering
       \includegraphics{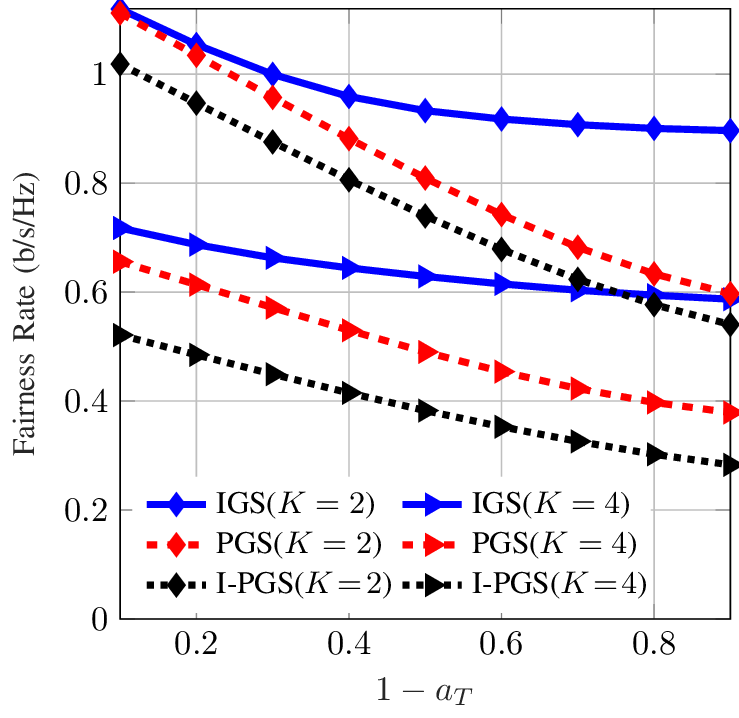}
        \caption{$K=2$ and $K=4$.}
    \end{subfigure}%
    \\ 
    \begin{subfigure}[t]{0.5\textwidth}
        \centering
           \includegraphics{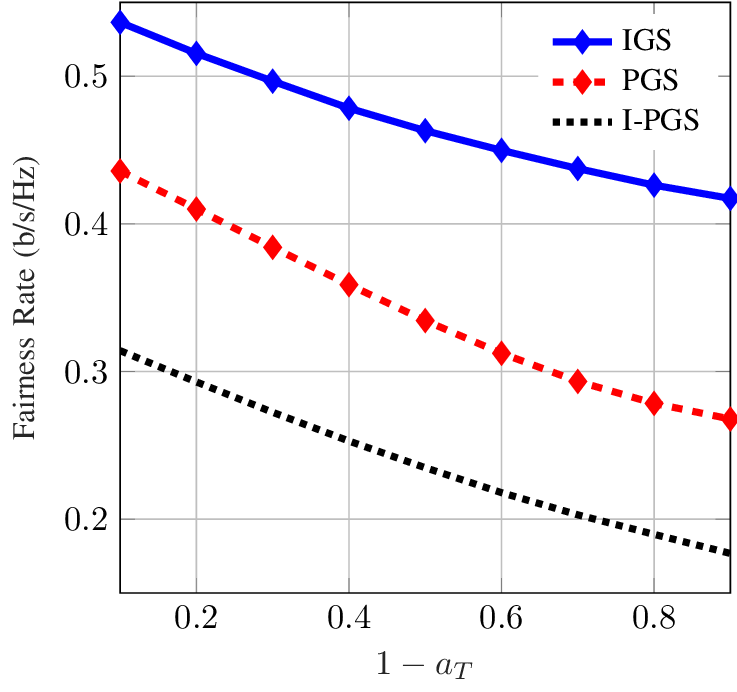}
        \caption{$K=6$.}
    \end{subfigure}
    \caption{Average fairness rate versus the I/Q imbalance level for the $K$-user $2\times 2$ MIMO IC with SNR$=0\,$dB.}
	\label{Fig-r-iq}
\end{figure}
Figure \ref{Fig-r-iq} shows the fairness rate versus the level of the I/Q imbalance,  $1-a_T$, for SNR$=0\,$dB and $N_T=N_R=2$. 
As can be observed, the IGS design is less affected by the HWI level for different $K$. 
When $K=2$, the IGS and PGS schemes perform very similarly in low HWI level. 
However, the performance of the PGS scheme drastically decreases with the HWI level, while the fairness rate of the IGS scheme decreases only slightly. 
When  $K=4$ and $K=6$, the same trend is observed, but the relative performance of the IGS scheme over the PGS scheme increases with $K$. 
Moreover, for a given $K$, the benefits of IGS increase with the level of the I/Q imbalance, as expected.

\begin{figure}[t!]
    \centering
    \begin{subfigure}[t]{0.5\textwidth}
        \centering
       \includegraphics{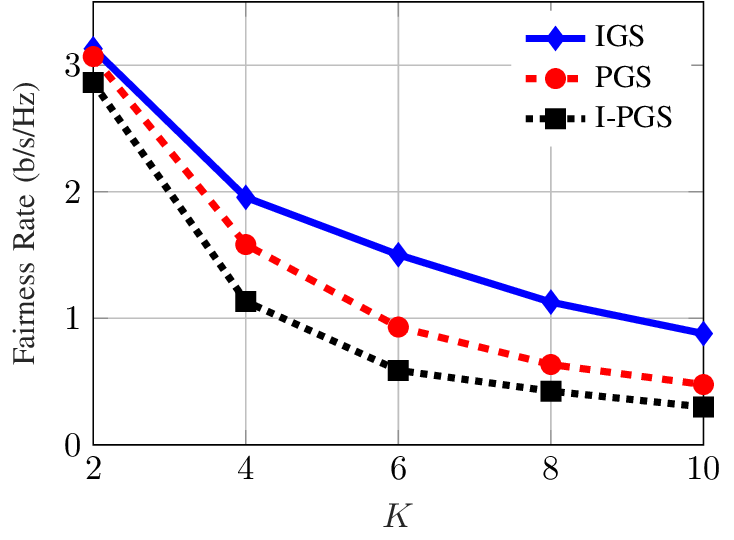}
        \caption{Achievable rate}
    \end{subfigure}%
    \\
    \begin{subfigure}[t]{0.5\textwidth}
        \centering
           \includegraphics{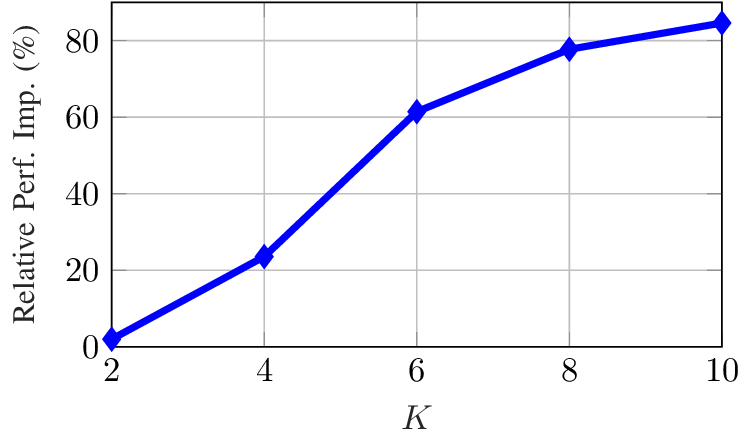}
        \caption{Relative improvement of IGS over PGS}
    \end{subfigure}
    \caption{Average fairness rate versus the number of users for the $K$-user $2\times 2$ MIMO IC with SNR$=10\,$dB.}
	\label{intersec}
\end{figure}
Figure \ref{intersec} considers the effect of the number of users on the fairness rate as well as on the performance of IGS for SNR$=10\,$dB, $a_T=0.6$ and $N_T=N_R=2$. As can be observed,
the fairness rates decreases when $K$ increases. 
Additionally, the relative performance improvement by IGS increases with the number of users.  For $K=10$, the relative improvement of IGS over PGS is more than $80\%$. 
The reason is that more users provoke more interference, which results in turn in more  improvements by IGS, as indicated before.  
\subsection{Achievable sum-rate}
\begin{figure}[t!]
    \centering
    \begin{subfigure}[t]{0.5\textwidth}
        \centering
       \includegraphics{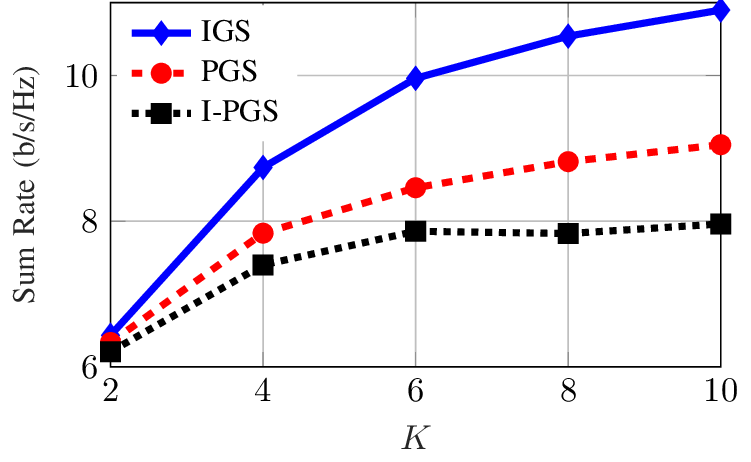}
        \caption{Achievable sum-rate}
    \end{subfigure}%
    \\
    \begin{subfigure}[t]{0.5\textwidth}
        \centering
           \includegraphics{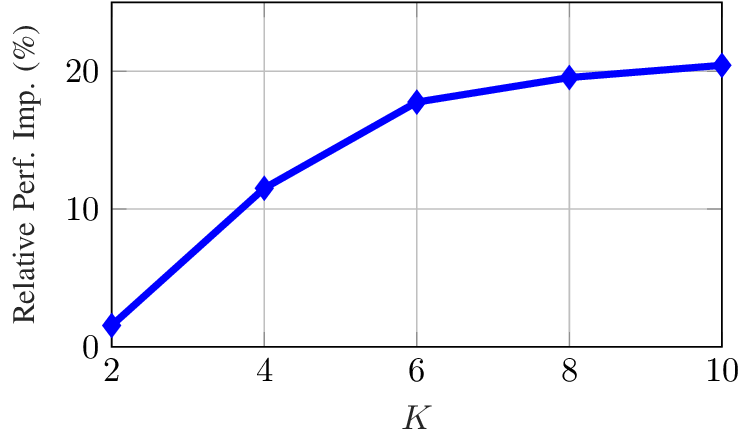}
        \caption{Relative improvement of IGS over PGS}
    \end{subfigure}
    \caption{Average achievable sum-rate and relative performance of IGS versus the number of users for the $K$-user $2\times 2$ MIMO IC with SNR$=10\,$dB.}
	\label{Fig1}
\end{figure}

In Fig. \ref{Fig1}, we show the effect of the number of users on the achievable sum-rate of the  $K$-user $2\times 2$ MIMO IC for SNR$=10\,$dB, and $a_T=0.6$. 
In this figure, we set the threshold in \eqref{rate-sum} to $R_{\text{th},k}=0$. 
We can observe that the sum-rate and also the relative performance improvement  of IGS over PGS are increasing with respect to $K$. 
Since we maximize the sum-rate without considering a QoS constraint, the rate of some users with weak direct links might be even 0, which causes less interference. 
As a result, the relative performance improvement of IGS is less significant than the improvements for the fairness rate observed in Fig.  \ref{intersec}. 

\subsection{Energy efficiency region}
In this subsection, we consider the EE region in \eqref{ee-region}. 
In general, IGS provides less EE benefits than rate benefits. 
For example, in \cite{soleymani2019energy}, it was shown that the conditions for the optimality of IGS over PGS in an UCR network are more stringent for EE than for rate. 
In other words, it might happen that IGS must be used for the SU rate to be maximized, while PGS must be used for the EE to be maximized.

\begin{figure}[t]
\centering
\includegraphics{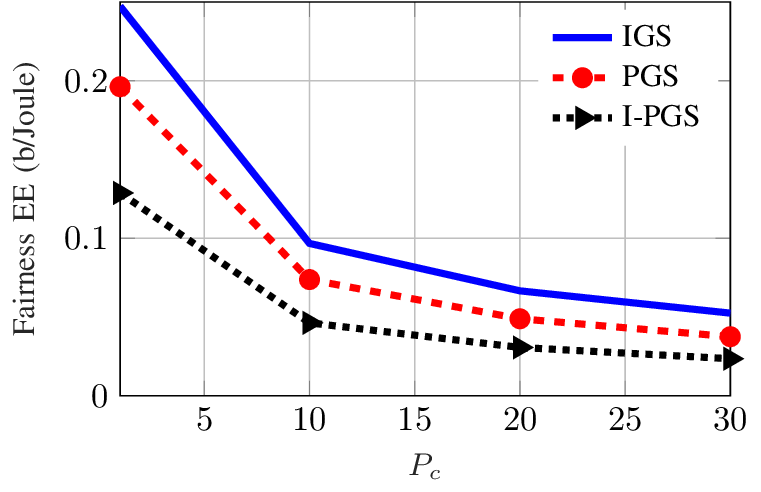}
\caption{Average fairness EE of the 6-user $2\times 2$ MIMO  IC versus  $P_c$.}
\label{ee-miso}
\end{figure}
\begin{figure}[t]
\centering
\includegraphics{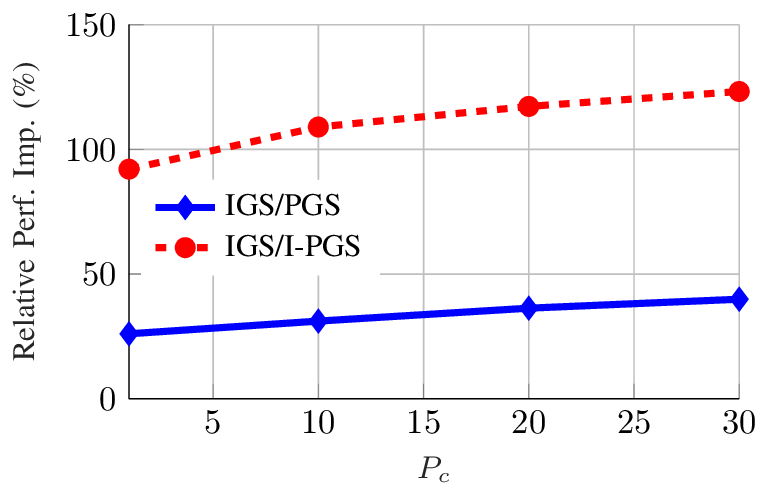}
\caption{Relative performance of IGS with respect to PGS and I-PGS versus $P_c$ for the 6-user $2\times 2$ MIMO  IC.}
\label{ee-b}
\end{figure}
In Fig. \ref{ee-miso}, we show the fairness EE of the  6-user $2\times 2$ MIMO IC versus $P_c$ for SNR$=10\,$dB,  and $a_T=0.6$. 
As can be observed, the fairness EE decreases with $P_c$. Moreover, the proposed IGS scheme outperforms the PGS scheme as well as I-PGS. 
Figure \ref{ee-b} shows the relative performance improvement by the IGS scheme with respect to the PGS and I-PGS schemes for the results in Fig. \ref{ee-miso}.  
As can be observed in these figures, the  fairness EE decreases with $P_c$; however, the benefits of employing IGS is increasing in $P_c$. 
The reason is that when $P_c$ is very large, the EE-region-optimization problem becomes equivalent to the achievable-rate-region problem, and as indicated, IGS can provide more gain in achievable-rate optimizations.

\subsection{Global Energy efficiency}
Figure \ref{gee} shows the global EE of the 6-user $2\times 2$ MIMO   IC versus $P_c$ for SNR$=10\,$dB,  and $a_T=0.6$. 
In this figure, we assume $R_{\text{th},k}=0$ in \eqref{ee-global}.
As can be observed, IGS provides minor benefits in terms of the global EE. 
Since the QoS constraint is not considered, it might happen that some  users are switched off, thus reducing the total level of interference. Moreover, the lower the interference level, the less need for advanced interference-management techniques. 
Thus, we can expect that the benefits of employing IGS decrease in global EE with respect to per-user EE. 
Note that IGS still performs slightly better than PGS, which can be due to the I/Q imbalance.
\begin{figure}[t]
\centering
\includegraphics{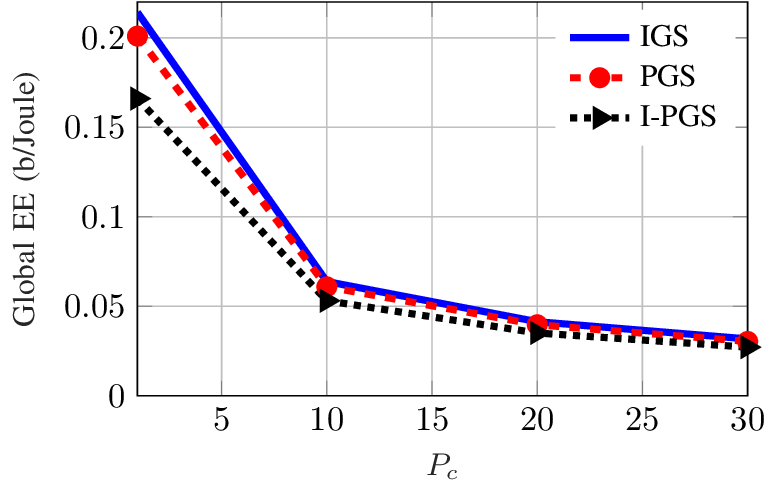}
\caption{Average global EE of the 6-user $2\times 2$ MIMO  IC versus $P_c$.}
\label{gee}
\end{figure}

\section{Conclusion}
This paper studied the performance  of IGS of a  $K$-user MIMO IC with HWI including I/Q imbalance at the transceivers. 
In the presence of I/Q imbalance, the received signal is a function of the widely linear transform of the transmitted signal and the aggregated noise. 
Hence, the effective noise is modeled as improper at the receiver side, which motivated us to consider the use of IGS.  
Considering achievable rates and EE as performance metrics, we proposed HWI-aware IGS schemes for the  $K$-user MIMO IC. 
We employed an optimization framework, which can obtain a stationary point of any optimization problem for interference-limited systems with TIN in which the objective function and/or constraints are linear functions of the achievable rate. 
In this paper, we derived a stationary point of the achievable rate-region, sum-rate maximization, EE region and global EE maximization problems. 
Our numerical results showed that IGS can improve the performance of the $K$-user MIMO IC with HWI from both achievable rate and EE points of view. We observed that the benefits of IGS as an interference-management technique increase with the number of users and decrease with the number of antennas. 
This is due to the fact that higher interference levels result in an increased need for interference management and consequently, more improvements by IGS.
We also observed that the benefit of employing IGS increases with impairment level.

As future work, it may be interesting to find out how close  the solution of this algorithm is to the corresponding global optimum solutions.
Additionally, our scheme is a centralized approach, which might not be applicable in some practical scenarios. Hence, distributed algorithms should also be developed. 

\section*{Acknowledgments}
The work of M. Soleymani and P. J. Schreier 
was supported by the German Research Foundation (DFG) under grant SCHR 1384/8-1. 
The work of I. Santamaria was supported by Ministerio de Ciencia e Innovacion of Spain, and AEI/FEDER funds of the E.U., under grants TEC2016-75067-C4-4-R (CARMEN) and  PID2019-104958RB-C43 (ADELE).
The authors would like to thank Dr. C. Lameiro for discussions on the numerical results.

\bibliographystyle{IEEEtran}
\bibliography{ref2}

\end{document}